\newtheorem{corollary}{Corollary}
\newtheorem{lemma}{Lemma}
\newtheorem{theorem}{Theorem}
\newtheorem*{remark}{Remark}
\let\oldequation*\equation*
\let\equation*\@undefined
\let\oldendequation*\endequation*
\let\endequation*\@undefined
\begin{document}
\title{Guaranteeing Completely Positive Quantum Evolution} 
\author{Daniel Dilley$^a$, Alvin Gonzales$^b$, Mark Byrd$^{a,b}$}
\address{$^a$ Department of Physics, Southern Illinois University, 1245 Lincoln Dr, Carbondale, IL 62901}
\address{$^b$ School of Computing, Southern Illinois University, 1230 Lincoln Dr, Carbondale, IL 62901}
\ead{quantumdilley@yahoo.com}
\vspace{10pt}
\begin{indented}
\item[]
\date{\today}
\end{indented}

\begin{abstract}
    In open quantum systems, it is known that if the system and environment are in a product state, the evolution of the system is given by a linear completely positive (CP) Hermitian map. CP maps are a subset of general linear Hermitian maps, which also include non completely positive (NCP) maps. NCP maps can arise in evolutions such as non-Markovian evolution, where the CP divisibility of the map (writing the overall evolution as a composition of CP maps) usually fails. Positive but NCP maps are also useful as entanglement witnesses. In this paper, we focus on transforming an initial NCP map to a CP map through composition with the asymmetric depolarizing map. We use separate asymmetric depolarizing maps acting on the individual subsystems.

    Previous work have looked at structural physical approximation (SPA), which is a CP approximation of a NCP map using a mixture of the NCP map with a completely depolarizing map. We prove that the composition can always be made CP without completely depolarizing in any direction. It is possible to depolarize less in some directions. We give the general proof by using the Choi matrix and an isomorphism from a maximally entangled two qudit state to a set of qubits.  We also give measures that describe the amount of disturbance the depolarization introduces to the original map. Given our measures, we show that asymmetric depolarization has many advantages over SPA in preserving the structure of the original NCP map. Finally, we give some examples. For some measures and examples, completely depolarizing (while not necessary) in some directions can give a better approximation than keeping the depolarizing parameters bounded by the required depolarization if symmetric depolarization is used.
\end{abstract}

%
%
%
%
%

\section{Introduction}


Linear Hermitian maps play an important role in many areas of quantum information science. The evolution of an open quantum system is often clearly non-unitary and given by a linear Hermitian map \cite{sudarshan_1961}. For quantum technologies, knowledge of the error map allows one to construct efficient quantum error correction codes provided that some conditions are met \cite{Nielsen_Chuang_Textbook_2011}. Linear Hermitian maps can be categorized as completely positive (CP) or non CP (NCP) (defined below) and the latter can be used as entanglement witnesses \cite{horodcecki1996SepOfMixedStsNecAndSuffCond,Terhal:00}. This paper focuses on transforming NCP maps to CP maps via composition with local asymmetric depolarizing maps. This transformation is important for research in areas such as entanglement, quantum error correction, and non Markovian evolution.

When the system and bath are initially uncorrelated, the system is known to evolve under a completely positive map.  This is a map, say $\Phi$, that is not only positive (maps positive operators to positive operators) but is also positive when extended by an arbitrary identity map $\mathcal{I}$, i.e., $\Phi\rightarrow \mathcal{I}\otimes\Phi$ to act on a larger system. However, it can be quite difficult to prepare the system such that it starts completely uncorrelated from its environment. For such cases, it is not known generally how to describe the system evolution with a CP linear Hermitian map \cite{Pechukas_1994, Alicki_1995, Pechukas_1995, Jordan_Shaji_Sudarshan_2004, Shaji_Sudarshan_2005}. Note, that it is possible in specific cases for the evolution of a system with initial correlations to the environment to be CP \cite{liu2014CPMaps, Rodriguez_Rosario_2008, Modi_2012OperationalApproachToOpenDyn}. It is also claimed in \cite{Jagadish_2021_InitialEntEntUnitariesAndCPMaps} that for any initial non maximally entangled system and environment pure state, there are infinitely many nonlocal unitaries that induce CP evolution on the system. Finally, there is the process tensor formalism \cite{Pollock_2018NonMarkovQuantProccCompleteFrameAndEfficCharac, Modi_2012OperationalApproachToOpenDyn}, which always results in CP linear Hermitian maps, but the process tensor acts on input maps and not on density matrices. In this paper, we are focused on mappings from density matrices to density matrices.

There is also disagreement on the physicality of NCP evolution maps \cite{allen2017Causal, Schmid_Ried_Spekkens_2018}. However, in the case of non-Markovian evolution the initial CP map tends to be non CP divisible (i.e., the map cannot be decomposed into CP maps for all arbitrary intermediate time steps) \cite{Pollock_2018NonMarkovQuantProccCompleteFrameAndEfficCharac}. This implies that an intermediate evolution map can be NCP \cite{Milz_2019CPDivisibilityDoesNotMeanMarkov}. For this paper, we focus on transforming a given NCP map to a CP map and leave discussions of physicality of NCP maps for future work. Still, we require the domain of the NCP map to be restricted so that the output of the map is positive.

Positive but NCP maps do not preserve the positivity of some entangled states, but they do for all separable states. Consequently, they can be used as entanglement witnesses \cite{Horodecki_2009_QuantumEntanglement}. The most common witness is the negativity of the partial transpose which is a necessary and sufficient condition for detecting entanglement for $2\times 2$ or $2\times 3$ systems \cite{Peres,horodcecki1996SepOfMixedStsNecAndSuffCond}. To implement the partial transpose experimentally, Horodecki and Eckert introduced structural physical approximation (SPA), which is a CP approximation of a NCP map \cite{Horodecki_Ekert_2002}.

In the SPA, the completely depolarizing map is mixed with the NCP map. This allows us to make the overall map CP. Note that this is equivalent to composing with a symmetric depolarization map. Symmetric depolarization destroys information by reducing the magnitude of the Bloch vector, or polarization vector, by reducing it equally in all directions.  This preserves the direction of the vector produced by the original NCP map. However, if we only focus on getting an approximation to the map that is CP, in many situations it is often unnecessary and worse to use symmetric depolarization. For example, if the NCP map only changes the state in one direction, we likely only need to depolarize in that direction to get a CP approximation. 

In quantum error correction, the goal is to correct for the effects of the error map. The known necessary and sufficient conditions for quantum error correction require that the error map is CP \cite{knill_1997TheoryOfQECC, bennett_1996MixedStateEntAndQEC}. In the case of NCP errors, the necessary and sufficient conditions to correct a CP map can lead to codes that are not in the domain of the error map, which means that the process, as described, is not physical \cite{gonzales2020sufficientCondAndConsForRevGenQuant}. Thus, the results of the present paper may help to enable the correction of NCP error maps by transforming them via the asymmetric depolarizing map (ADM) into CP maps.


Furthermore, the process of creating a CP composition via the ADM may reduce the amount of errors effecting the system. For example (see Subsection \ref{subsec:ex4Repolar}), if the NCP error map causes a re-polarization error, i.e., it extends the Bloch vector of the input state, the asymmetric depolarizer that creates a CP composition can eliminate the re-polarization error. In these situations, using the ADM is better than the symmetric depolarizer because the ADM offers more freedom for minimizing the errors occurring on the input state.

In this paper, we investigate composing the asymmetric depolarizing map with a NCP map to get a CP composition. We give a novel proof for the complete positivity of the composition by using an isomorphism between a bipartite maximally entangled qudit and its qubit representation. From this isomorphism, the positivity of the Choi matrix is proven. In Section \ref{sec:admmeas}, we determine the amount of disturbance the ADM introduces and compare it with symmetric depolarization. Using the fidelity and our $M_1$ measure (see Eq. \eqref{Eq:M_1_measuregeneral}), we provide examples that show the advantages of using asymmetric depolarizer over the symmetric one.

The $M_1$ provides a measure of the overall depolarization used to modify a map.  In some scenarios, it is beneficial to use the asymmetric depolarizer to depolarize more in some directions than is required when using the symmetric depolarizer. This may sound counter-intuitive, but one may obtain a better fidelity or $M_1$ measure in these instances than would have been attained by restricting depolarization in all directions to be less than or equal to the amount required by symmetric depolarization. We give an example of this case in \ref{subsec:ex3}.

\section{Background}





In the case of an uncorrelated system and bath, the evolution of the system is given by a completely positive (CP) map that can be described by \cite{Choi_1975}
\begin{align}\label{eq:osrCP}
    \mathcal{E}(\rho_S) = \sum_i E_i\rho_S E_i^\dagger.
\end{align}
We drop subscripts when it is clear from context which subspace we are considering.  $\mathcal{E}$ is CP iff $\mathcal{E}$ is a positive map, (it is positive for all positive inputs) and the map $\mathcal{E}\otimes \mathcal{I}_n$ is also positive for all positive integers $n$ \cite{Choi_1975}. Complete positivity can also be described by the Choi matrix. Let $\mathcal{E}:\mathbb{C}^{n\times n}\rightarrow \mathbb{C}^{m\times m}$. Then the map is CP iff its Choi matrix
\begin{align}
    C_\mathcal{E}=\sum_{ij=0}^{n-1}\mathcal{E}(\op{i}{j})\otimes\op{i}{j}
\end{align}
is positive \cite{Choi_1975}.  (See also Sudarshan, Matthews and Rau \cite{sudarshan_1961}, hereafter referred to as (SMR).)

If the map is not completely positive map (NCP)
\begin{align}\label{eq:osrNCP}
    \mathcal{E}(\rho_S) = \sum_i \eta_iE_i\rho_S E_i^\dagger,
\end{align}
where $\eta_i\in\{\pm 1\}$ and it is necessary that there $\exists \eta_i=-1$ \cite{Choi_1975, Pechukas_1994, Alicki_1995, Pechukas_1995, Jordan_Shaji_Sudarshan_2004, Shaji_Sudarshan_2005}.  (This may potentially happen if the system and environment are initially correlated before they evolve.)  

The first representation of maps we gave in Eq.~\eqref{eq:osrCP} and Eq.~\eqref{eq:osrNCP} is called the operator sum representation (OSR).  In SMR, the evolution of the system is given by the A-matrix $\mathcal{A}$ acting on the vectorized form of $\rho_S$. This vectorization is given by
\begin{align}
    \text{vec}(\rho_s)=\text{vec}\left(\sum{c_{ij}\op{\alpha_i}{\beta_j}}\right)=\sum{c_{ij}\ket{\alpha_i\beta_j}},
\end{align}
where $\ket{\alpha_i}$ and $\ket{\beta_j}$ are basis vectors. Then, the action of $\mathcal{A}$ is given by matrix multiplication
\begin{align}
    \text{vec}(\rho^\prime) = \mathcal{A}\text{vec}(\rho),
\end{align}
where we dropped the text vec for simplification of notation. In index notation, this is given by \cite{sudarshan_1961}
\begin{align}\label{eq:smr_def}
    \rho'_{r^\prime s^\prime} = \sum_{rs}\mathcal{A}_{r^\prime s^\prime,rs}\rho_{rs}.
\end{align}
We refer to this as the SMR representation which was introduced in Ref.~\cite{sudarshan_1961}.  To preserve hermiticity and trace, $\mathcal{A}$ also satisfies
\begin{align}
    \mathcal{A}_{s'r',sr}=(\mathcal{A}_{r's',rs})^*,
\end{align}
and
\begin{align}
    \sum_{r'} A_{r'r',sr} = \delta_{sr},
\end{align}
respectively.

There is also a B-matrix $\mathcal{B}$, which is related to $\mathcal{A}$ and given by
\begin{align}
    \mathcal{B}_{r'r,s's}\equiv\mathcal{A}_{r's',rs}.
\end{align}
The constraints on the A-matrix imply conditions on B for hermiticity
\begin{align}
    \mathcal{B}_{r'r,s's}=(\mathcal{B}_{s's,r'r})^*,
\end{align}
and trace preservation
\begin{align}
    \sum_r \mathcal{B}_{r'r,s'r}=\delta_{r's'}.
\end{align}
The B-matrix is often used due to the fact that $\mathcal{B}$ is Hermitian, and therefore has an eigenvector/eigenvalue decomposition
\begin{align}
    \mathcal{B}_{r^\prime r, s^\prime s} \rho_{rs}= \sum_{\alpha} \gamma(\alpha) C^\alpha_{r^\prime r}\rho_{rs} (C^\alpha_{s^\prime s})^*,
\end{align}
where the $C^{(\alpha)}$ are the eigenvectors and $\gamma$ the eigenvalues of $\mathcal{B}$. One may write the map as
\begin{equation}\label{Bmap}
    \Phi(\rho) = \mathcal{B} \rho = \sum_\alpha \eta_\alpha A_\alpha \rho A_\alpha^\dagger,
\end{equation}
where $A_\alpha \equiv \sqrt{|\gamma(\alpha)|}C^\alpha$ so that $\eta_\alpha = \pm 1$.  Thus ${\cal B}$ is positive iff the map is CP.   In other words, the map is completely positive if and only if all $\eta_\alpha=1$. This shows us the relation between the OSR and SMR representation. It is known that the map is completely positive if and only if all $\eta_\alpha=1$.   

\subsection{Structural Physical Approximation}
The problem of converting a NCP map to a CP map has been studied under structural physical approximation (SPA), which was first introduced by Horodecki and Ekert, and expanded upon by several authors \cite{Horodecki_Ekert_2002, horodecki2001limits,  fiurasek2002spa, Alves_Horodecki_Oi_Kwek_Ekert_2003, horodecki2003QuantEnt, Horodecki2003LimitsOp, Augusiak_2011SPA, Lim2012QutritSPA, Bae2017SPAReview, adhikari2018SPA, kumari2019}. SPAs are often used to approximate the partial transpose map and detect entanglement \cite{Chruscinski2009EntWit, korbicz2008SPABreak, chruscinski2010EntWit, Chruscinski2011EntWit, Augusiak_2011SPA, lim2011ExpSPA, wang2011EntWitFromDensMat, lim2011ExperPartTrans, STORMER2013SepStatesSPA, wang2013SPAEntWit, Kalev2013ApproxTrans, Chruciski2014EntWit, Chrusciski2014DispSPAConj, Adhikari2018EntNegat, Bae2019EntDet}. Writing a trace preserving linear Hermitian map as the affine combination of two completely positive trace preserving (CPTP) maps, Jiang et. al. \cite{jiang_2020_PhysicalImplementOfQuantMapsAndItsAppInErrorMitigation} introduced the physical implementable measure, which can be seen as a measure of how well a NCP map can be approximated with a CP map. The measure is zero if and only if the map is CPTP. Similarly, Regula et. al. \cite{Regula_2021_OperationalAppOfTheDiamondNormAndRelatedMeasInQuantifTheNonPhysOfQuantMaps} defined measures quantifying the cost of simulating NCP maps with a mixture of maps. They investigated the relation with the diamond norm and showed that these measures equal the diamond norm when the map being approximated is linear and trace preserving. Finally, De Santis and Giovannetti \cite{DeSantis_2021_MeasNonMarkovViaIncoherentMixWithMarkovDynamics} defined a measure of non-Markovianity based on optimally approximating a non-Markovian map with a Markovian map. The approximation is performed by mixing the non-Markovian map with a minimum necessary amount of a Markovian map.

In SPA, the NCP superoperator $\Phi$ is approximated by
\begin{align}\label{phiMixtureMaxDep}
    \widetilde{\Phi}(\rho)=p\mathcal{L}_\text{comp}(\rho)+(1-p)\Phi(\rho),
\end{align}
where $\mathcal{L}_\text{comp}$ is the completely depolarizing channel, i.e., $\mathcal{L}_\text{comp}(\rho)=\tr(\rho)\mathbb{I}/d$, where $d$ is the dimension of $\rho$, and $0\leq p \leq 1$. Notice that SPA is equal to the composition
\begin{align}\label{eq:spaComp}
    \widetilde{\Phi}=\mathcal{L}_{1-p}\circ\Phi,
\end{align}
where the $\mathcal{L}_{1-p}$ is the symmetric depolarization channel which scales the polarization vector $n$ \cite{Mahler:book,Jakob:01,Byrd/Khaneja:03,Kimura} 
by the factor $1-p$. The symmetric depolarization channel for a single qubit can be written in OSR as
\begin{align}
    \mathcal{L}_{1-p}(\rho)=\left(1-\dfrac{3p}{4}\right)\rho+\dfrac{p}{4}(\sigma_x\rho\sigma_x+\sigma_y\rho\sigma_y+\sigma_z\rho\sigma_z).
\end{align}

We study the composition
\begin{align}\label{eq:asymComp}
    \widetilde{\Phi'}=\mathcal{L}'\circ\Phi,
\end{align}
where $\mathcal{L}'$ performs asymmetric depolarization on each local state.  The asymmetric depolarization channel for a single qubit can be written in OSR as
\begin{align}
    \label{Eq:OSR_of_ADM}\mathcal{L}(\rho)=&\dfrac{1}{4}\left[(1+\alpha+\beta+\gamma)\rho+(1+\alpha-\beta-\gamma)\sigma_x\rho\sigma_x\right.\\
    &\left.+(1-\alpha+\beta-\gamma)\sigma_y\rho\sigma_y+(1-\alpha-\beta+\gamma)\sigma_z\rho\sigma_z\right],
\end{align}
where $\alpha$, $\beta$, and $\gamma$ are the amounts of depolarization in the $x$, $y$, and $z$ directions, respectively. The asymmetric depolarizer does not generally preserve the direction of the Bloch vector of the input state. From the SPA, we immediately see that $\widetilde{\Phi'}$ can be made CP with appropriate (symmetric) depolarization. However, in the SPA it is necessary that the systems are depolarized symmetrically and equally. Thus, it is not immediately clear that depolarizing the individual local systems differently leads to a CP composition. 

Using an isomorphism between a $2^n$ dimensional qudit and $n$ qubits, we prove that symmetric depolarization is not necessary. We also show that when using asymmetric depolarization, completely depolarizing is not necessary in any direction for any of the local systems for Eq. \eqref{eq:asymComp} to be CP. However, we also show that there exist NCP maps that are robust against depolarization and completely depolarizing is \textit{almost} necessary. Also keep in mind that extending the A-matrix to higher dimensions is not as straight-forward as it may seem. For instance, if we wanted wanted to implement a single qubit channel on the first part of an entangled two-qubit state, the corresponding A-matrix acting on the quantum state will in general not have the form $A \otimes \mathbb{I} (\rho)$. In \ref{Sec:extendingA-matrix}, we give the explicit transformation that makes this extension possible and easy to use. Then we provide a simple example in \ref{Sec:extendingA-matrix_example} to caution readers so they do not make the mistake of using the wrong form of the A-matrix for the extension.


\section{NCP to CP}
Our goal of transforming a NCP map to a CP map is equivalent to going from an initial B-matrix with at least one negative eigenvalue to a final B-matrix with non-negative eigenvalues. Equivalently, this is going from Eq. \eqref{eq:osrNCP} to Eq. \eqref{eq:osrCP} in OSR.

\subsection{Composition: Asymmetric Depolarization}
We use composition with the asymmetric depolarization map, which is also known as generalized depolarization. Obviously, if we completely depolarize the composition would be a CP map because we always end up with the maximally mixed state. Asymmetric depolarization has been studied in other contexts in the past \cite{bowen2001TeleportationDepolarization, Jagadish_2019MeasurePauliMaps, Jagadish_2018InvitationChannels, siudzi2019GeometryPauliMaps}, but here we use it to transform a NCP map to a CP map without completely depolarizing. Note that the ADM is doubly stochastic, i.e., it preserves the identity matrix and it is trace preserving. Before proving the general case, we show an isomorphism from maximally entangled qudits to the basis of tensored Pauli matrices that we will use. We also show this isomorphism to the basis of tensored $3\times 3$ Gell-Mann matrices. This Theorem \ref{thm:maxquditToPauli} will be used to rewrite the Choi matrix in Pauli matrix form.
\begin{theorem}\label{thm:maxquditToPauli}
    For the unnormalized maximally entangled two qudit state with qudit dimension $d=2^{n}$, 
    we have the relation
    \begin{align}\label{eq:maxquditToPauli}
        \notag\op{\widetilde\Phi^+_{d}}=\sum_{i,j=0}^{d-1}\op{ii}{jj}&=\dfrac{1}{d}\sum_{i_1,i_2,...,i_{n}=0}^{3}w_{i_1,i_2,...,i_{n}}(\sigma_{i_1,i_2,...,i_{n}})^{\otimes 2}\\
        &=\dfrac{1}{d}\sum_{\overline{i}=0}^{3}w_{\overline i}(\sigma_{\overline{i}})^{\otimes 2},
    \end{align}
    where $i$ represents the identity or $x, y$ or $z$ for the Pauli matrix, the subscript on the $i$ represents the subspace, $\overline i=i_1,i_2,...,i_{n}$, $(\sigma_{i_1,i_2,...,i_{n}})^{\otimes 2}=(\sigma_{i_1}\otimes \sigma_{i_2}\otimes...\otimes\sigma_{i_{n}})^{\otimes 2}$, and 
    \begin{align}
        \notag w_{i_1,i_2,...,i_{n}}=\{&+1\text{ for an even number of $\sigma_2$s in $\sigma_{i_1,i_2,...,i_{n}}$}\\
                &-1 \text{ for an odd number of $\sigma_2$s in $\sigma_{i_1,i_2,...,i_{n}}$}\}.
    \end{align}
    We refer to the right hand side of Eq. \eqref{eq:maxquditToPauli} as the Pauli Form. Note that $\sigma_0=\mathbb{I}, \sigma_1=\sigma_x$, $\sigma_2=\sigma_y$, and $\sigma_3=\sigma_z$.
    
    For an unnormalized maximally entangled two qudit state with qudit dimension $d=3^{n}$, we have the relation 
    \begin{align}\label{eq:maxquditToGellman}
        \op{\widetilde\Phi^+_{d}}=\sum_{i,j=0}^{d-1}\op{ii}{jj}&=\sum_{\overline{i}=\overline 0}^{\overline 8}w_{\overline i}(\lambda_{\overline{i}})^{\otimes 2},
    \end{align}
    where $\{\lambda_i\}$ are re-scaled $3\times 3$ Gell-Mann matrices including identity (the identity is scaled by a factor of $1/\sqrt{3}$ and the others are scaled by a factor of $1/\sqrt{2}$); $\overline i=i_1,i_2,...,i_{n}$; and
    \begin{align}
        \notag w_{i_1,i_2,...,i_{n}}=\{&+1\text{ for an even number of complex $\lambda$'s in $\lambda_{\overline i}$}\\
                &-1 \text{ for an odd number of complex $\lambda$'s in $\lambda_{\overline i}$}\}.
    \end{align}
\end{theorem}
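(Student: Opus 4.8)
The plan is to reduce the statement to a single basis-independent identity for the unnormalized maximally entangled projector, and then read off the Pauli (resp. Gell-Mann) form by specializing the operator basis. Concretely, I would first establish the general resolution
$$\sum_{i,j=0}^{d-1}\op{ii}{jj}=\sum_{\mu}G_\mu\otimes G_\mu^{*},$$
valid for any basis $\{G_\mu\}$ of the $d\times d$ matrices that is orthonormal in the Hilbert--Schmidt inner product, $\tr(G_\mu^\dagger G_\nu)=\delta_{\mu\nu}$. This is proved by a one-line matrix-element computation: the left side has entries $\bra{ab}\big(\sum_{ij}\op{ii}{jj}\big)\ket{cd}=\delta_{ab}\delta_{cd}$, while the right side has entries $\sum_\mu (G_\mu)_{ac}\overline{(G_\mu)_{bd}}$, which equals $\delta_{ab}\delta_{cd}$ by the completeness relation for the orthonormal basis $\{G_\mu\}$ (viewing each $G_\mu$ as a vector in $\mathbb{C}^{d^2}$ indexed by its matrix entries). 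I expect this to be the conceptual core; everything else is bookkeeping.

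Second, I would specialize to $d=2^n$ with $G_{\overline i}=\sigma_{\overline i}/\sqrt{d}$, where $\sigma_{\overline i}=\sigma_{i_1}\otimes\cdots\otimes\sigma_{i_n}$ ranges over the tensored Pauli words. These are orthonormal after rescaling because $\tr(\sigma_{\overline i}\sigma_{\overline j})=2^n\delta_{\overline i\overline j}=d\,\delta_{\overline i\overline j}$, which is exactly what produces the prefactor $1/d$ in Eq.~\eqref{eq:maxquditToPauli}. The sign $w_{\overline i}$ then comes purely from complex conjugation of the basis element: since $\sigma_y^{*}=-\sigma_y$ while $\mathbb{I},\sigma_x,\sigma_z$ are real, each factor obeys $\sigma_{i_k}^{*}=w_{i_k}\sigma_{i_k}$ with $w_{i_k}=-1$ iff $i_k=2$, so $\sigma_{\overline i}^{*}=\big(\prod_k w_{i_k}\big)\sigma_{\overline i}=w_{\overline i}\,\sigma_{\overline i}$ with $w_{\overline i}=(-1)^{\#\{k:\,i_k=2\}}$. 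Substituting $G_{\overline i}\otimes G_{\overline i}^{*}=\tfrac1d\,w_{\overline i}(\sigma_{\overline i})^{\otimes2}$ into the resolution yields the claimed Pauli form.

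Third, the Gell-Mann case $d=3^n$ is identical in structure. Taking $\{\lambda_\mu\}$ to be the rescaled $3\times3$ Gell-Mann matrices together with the rescaled identity (rescaled by $1/\sqrt2$ and $1/\sqrt3$ respectively), one checks $\tr(\lambda_\mu\lambda_\nu)=\delta_{\mu\nu}$, so their tensor products already satisfy unit normalization and no $1/d$ prefactor survives in Eq.~\eqref{eq:maxquditToGellman}. The only new point is the conjugation rule: the three antisymmetric generators $\lambda_2,\lambda_5,\lambda_7$ are purely imaginary, hence $\lambda^{*}=-\lambda$, while the remaining symmetric generators and the identity are real, giving $\lambda_{\overline i}^{*}=w_{\overline i}\lambda_{\overline i}$ with $w_{\overline i}=(-1)^{\#\text{complex }\lambda\text{'s}}$, exactly as stated.

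The step I expect to require the most care is not any single computation but the normalization and sign bookkeeping that ties the pieces together: keeping the Hilbert--Schmidt normalization consistent (which fixes whether a $1/d$ appears), and confirming that the multiplicativity $w_{\overline i}=\prod_k w_{i_k}$ correctly reproduces the parity rule. I would double-check the whole chain on the $n=1$ base cases — the single-qubit identity $\sum_{ij}\op{ii}{jj}=\tfrac12(\mathbb{I}\otimes\mathbb{I}+\sigma_x\otimes\sigma_x-\sigma_y\otimes\sigma_y+\sigma_z\otimes\sigma_z)$ and its qutrit analogue — since these already exhibit the sole nontrivial minus sign(s) and pin down the conventions. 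An equivalent route, closer to the ``isomorphism to a set of qubits'' phrasing, is to factor the $2^n$-dimensional maximally entangled ket into $n$ Bell pairs after the permutation $A_1\cdots A_nB_1\cdots B_n\mapsto A_1B_1\cdots A_nB_n$ and apply the $n=1$ identity factorwise; this makes the product structure of $w_{\overline i}$ manifest but requires tracking that permutation consistently on both sides.
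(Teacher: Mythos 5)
Your proposal is correct, but it takes a genuinely different route from the paper. You prove the basis-independent resolution $\sum_{i,j}\op{ii}{jj}=\sum_{\mu}G_\mu\otimes G_\mu^{*}$ for any Hilbert--Schmidt orthonormal operator basis via a matrix-element/completeness computation, and then obtain both the Pauli and Gell-Mann forms as specializations, with the signs $w_{\overline i}$ arising from the conjugation rule $\sigma_y^{*}=-\sigma_y$ (resp.\ the purely imaginary Gell-Mann generators) and the $1/d$ prefactor from the normalization $\tr(\sigma_{\overline i}\sigma_{\overline j})=d\,\delta_{\overline i\overline j}$. The paper instead works directly with the claimed Pauli (resp.\ Gell-Mann) form: it uses the ricochet property $\sum_i M\otimes\mathbb{I}\ket{ii}=\sum_i\mathbb{I}\otimes M^{T}\ket{ii}$ to show that $\ket{\widetilde\Phi^{+}_d}$ is an eigenvector of that operator with eigenvalue $d$ (the transpose, rather than the conjugate, being the source of the $\pm$ signs), writes the spectral decomposition, and then computes the trace of the square to show all remaining eigenvalues vanish, so the form is the rank-one projector $\op{\widetilde\Phi^{+}_d}$. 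Your argument is more direct and more general --- it handles any dimension and any orthonormal basis in one stroke and does not need the Hermiticity/purity step --- while the paper's argument is self-contained around the ricochet identity and makes the eigenvector structure of the maximally entangled state explicit. Both are complete proofs; your suggested sanity check on the $n=1$ base cases and the alternative Bell-pair factorization are consistent with, but not needed for, the main line of either argument.
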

\begin{proof}
    The proof relies on figuring out the spectral decomposition of the Pauli Form. The ricochet property gives us the useful relation
    \begin{align}
        \sum_i M\otimes \mathbb{I} \ket{ii}=\sum_i \mathbb{I} \otimes M^T\ket{ii}.
    \end{align}
    Acting on the right of the Pauli Form with $\ket{\widetilde{\Phi}^+_d},$ we get
    \begin{align}
        \notag\dfrac{1}{2^{n}}&\sum_{i_1,i_2,...,i_{n}=0}^{3}(w_{i_1,i_2,...,i_{n}}\sigma_{i_1,i_2,...,i_{n}}\otimes\sigma_{i_1,i_2,...,i_{n}})\sum_{i=0}^{d-1}\ket{ii} \\
        \notag&=\dfrac{1}{2^{n}}\sum_{i_1,i_2,...,i_{n}=0}^{3}(w_{i_1,i_2,...,i_{n}}\mathbb{I}\otimes\sigma_{i_1,i_2,...,i_{n}}\sigma_{i_1,i_2,...,i_{n}}^T)\sum_{i=0}^{d-1}\ket{ii}\\
        \notag&=\dfrac{1}{2^{n}}\sum_{i_1,i_2,...,i_{n}=0}^{3}(\mathbb{I}\otimes\mathbb{I})\sum_{i=0}^{d-1}\ket{ii}\\
        \notag&=\dfrac{1}{2^{n}}(4^{n})\sum_{i=0}^{d-1}\ket{ii}\\
        &=2^{n}\sum_{i=0}^{d-1}\ket{ii}.
    \end{align}
    Note that the $\pm$ signs in the definition of $w$ comes from the relation $\sigma_y^T = - \sigma_y$. Therefore, $\ket{\widetilde\Phi^+_d}$ is an eigenvector of the Pauli form. Also, keep in mind that $\ip{\widetilde\Phi^+}=d=2^{n}$. Thus, we can write our Pauli form in its spectral decomposition as
    \begin{align}\label{eq:quditPauli1}
        \dfrac{1}{d}\sum_{i_1,i_2,...,i_{n}=0}^{3}w_{i_1,i_2,...,i_{n}}(\sigma_{i_1,i_2,...,i_{n}})^{\otimes 2}=\op{\widetilde\Phi_d^+}+\sum_i\lambda_i\op{\lambda_i},
    \end{align}
    where $\lambda_i$ are real eigenvalues and $\{\ket{\lambda_i},\ket{\widetilde\Phi^+_d}\}$ are orthogonal eigenvectors. Next, we show that the Pauli Form is a pure state and thus the $\lambda_i$ are all zero. Taking the square of both sides of Eq. \eqref{eq:quditPauli1} and then taking the trace, we have
    \begin{align}\label{eq:quditPauli2}
        \tr\left(\left[\dfrac{1}{d}\sum_{i_1,i_2,...,i_{n}=0}^{3}w_{i_1,i_2,...,i_{n}}(\sigma_{i_1,i_2,...,i_{n}})^{\otimes 2}
        \right]^2\right)&=\tr\left(\left[\op{\widetilde\Phi_d^+}+\sum_i\lambda_i\op{\lambda_i}\right]^2\right).
    \end{align}
    Since the Pauli matrices are traceless, Eq. \eqref{eq:quditPauli2} simplifies to
    \begin{align}
        \notag\Rightarrow\notag\tr\left[\dfrac{1}{d^2}\sum_{i_1,i_2,...,i_{n}=0}^{3}\mathbb{I}\otimes\mathbb{I}
        \right]&=\tr\left[2^{n}\op{\widetilde\Phi_d^+}+\sum_i\lambda_i^2\op{\lambda_i}\right]\\
        \notag\Rightarrow\dfrac{1}{4^{n}}4^{n}4^{n}&=4^{n}+\sum_i\lambda_i^2\\
        \notag\Rightarrow 4^n&=4^n+\sum_i\lambda_i^2\\
        \Rightarrow \lambda_i&=0.
    \end{align}
    Therefore, we must have
    \begin{align}
        \dfrac{1}{d}\sum_{i_1,i_2,...,i_{n}=0}^{3}w_{i_1,i_2,...,i_{n}}(\sigma_{i_1,i_2,...,i_{n}})^{\otimes 2}=\op{\widetilde\Phi_d^+}.
    \end{align}
    
    The qutrit case follows the same arguments. We refer to the right hand side of Eq. \eqref{eq:maxquditToGellman} as the Gell-Mann form. Once again acting on the right hand side of Eq. \eqref{eq:maxquditToGellman} with $\ket{\Phi_d^+}=\sum_{j=0}^{d-1}\ket{jj}$ and using the ``ricochet" property we get
    \begin{align}
       \notag\sum_{\overline{i}=\overline 0}^{\overline 8}w_{\overline i}(\lambda_{\overline{i}})^{\otimes 2}\sum_{j=0}^{d-1}\ket{jj}&=\sum_{\overline{i}=\overline 0}^{\overline 8}(\mathbb{I}\otimes\lambda_{\overline{i}}^2)\sum_{j=0}^{d-1}\ket{jj}\\
       &=3^n\sum_{j=0}^{d-1}\ket{jj}, 
    \end{align}
    where the second line comes from the fact that $\sum_{i=0}^{i=8}\lambda_i^2=3\mathbb{I}$. Therefore, $\sum_{j=0}^{d-1}\ket{jj}$ is an eigenvector of the Gell-Mann form. Thus,
    \begin{align}\label{eq:quditGellman1}
        \sum_{\overline{i}=\overline 0}^{\overline 8}w_{\overline i}(\lambda_{\overline{i}})^{\otimes 2}=\op{\widetilde\Phi_d^+}+\sum_i\gamma_i\op{\gamma_i},
    \end{align}
    where $\gamma_i$ are real eigenvalues and $\{\ket{\gamma_i},\ket{\widetilde\Phi^+_d}\}$ are orthogonal eigenvectors. Finally, we prove that the Gell-Mann form is a pure state. Squaring both sides of Eq. \eqref{eq:quditGellman1} we get
    \begin{align}\label{eq:quditGellman2}
        \notag&\tr\left[\left(\sum_{\overline{i}=\overline 0}^{\overline 8}w_{\overline i}(\lambda_{\overline{i}})^{\otimes 2}\right)^2\right]=\tr\left[\left(\op{\widetilde\Phi_d^+}+\sum_i\gamma_i\op{\gamma_i}\right)^2\right]\\
        \notag&\rightarrow\tr\left[\sum_{\overline{i}=\overline 0}^{\overline 8}(\lambda_{\overline{i}}\otimes \lambda_{\overline{i}})^2+\sum_{\overline{i}\neq \overline{j}}w_{\overline i}(\lambda_{\overline{i}})^{\otimes 2}*w_{\overline j}(\lambda_{\overline{j}})^{\otimes 2}\right]\\
        &=\tr\left[3^{n}\op{\widetilde\Phi_d^+}+\sum_i\gamma_i^2\op{\gamma_i}\right],
    \end{align}
    where on the second line we explicitly separated the cross terms on the left hand side. Note that $\tr\left(\lambda_i^2\otimes\lambda_i^2\right)=1$ $\forall i.$
    
    Using this property and continuing Eq. \eqref{eq:quditGellman2}
    \begin{align}\label{eq:quditGellman3}
        \notag&\rightarrow 3^{2n} + \tr\left[\sum_{\overline{i}\neq \overline{j}}w_{\overline i}(\lambda_{\overline{i}})^{\otimes 2}*w_{\overline j}(\lambda_{\overline{j}})^{\otimes 2}\right]\\
        \notag&=\tr\left(\left[3^{n}\op{\widetilde\Phi_d^+}+\sum_i\gamma_i^2\op{\gamma_i}\right]^2\right)\\
        \notag&\rightarrow3^{2n}=3^{2n}+\sum_i\gamma_i^2\\
        &\rightarrow\sum_i\gamma_i^2=0,
    \end{align}
    where we used the fact that the trace of the cross terms on the left hand side is zero and $\sum_i\ip{ii}=3^{n}$. Thus, the Gell-Mann form is a pure state and the result follows.
\end{proof}

Theorem \ref{thm:maxquditToPauli} has an interesting consequence that is shown in \ref{App:2}. We give the explicit relation between k-pairs of maximally entangled qubits and a maximally entangled qudit. We can now use the help of Theorem \ref{thm:maxquditToPauli} to prove our main theorem.
\begin{theorem}\label{thm:NCP_to_CP}
Any $n$ qubit trace-preserving NCP map $\mathcal{A}_\text{NCP}$ can be made CP by composing it with a map that performs asymmetric depolarization on each qubit with appropriate nonzero values of $\alpha_i, \beta_i,$ and $\gamma_i$, where $i$ represents the $i^\text{th}$ qubit being depolarized. 
\end{theorem}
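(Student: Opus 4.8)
The plan is to characterize complete positivity of the composition $\widetilde{\Phi'}=\mathcal{L}'\circ\mathcal{A}_\text{NCP}$ through positivity of its Choi matrix, and then to exhibit an open neighborhood of the completely-depolarizing point in parameter space on which this positivity survives with every depolarization strength strictly nonzero. First I would record the action of a single-qubit asymmetric depolarizer on the Pauli basis: from Eq.~\eqref{Eq:OSR_of_ADM} one checks directly that $\mathcal{L}_k$ fixes $\sigma_0=\mathbb{I}$ and scales $\sigma_x,\sigma_y,\sigma_z$ by $\alpha_k,\beta_k,\gamma_k$ respectively. Hence the local map $\mathcal{L}'=\mathcal{L}_1\otimes\cdots\otimes\mathcal{L}_n$ is diagonal on the $n$-qubit Pauli strings, $\mathcal{L}'(\sigma_{\overline i})=\lambda_{\overline i}\,\sigma_{\overline i}$, where $\lambda_{\overline i}=\prod_k c_{k,i_k}$ is a monomial in the depolarization parameters (with $c_{k,0}=1$, $c_{k,1}=\alpha_k$, $c_{k,2}=\beta_k$, $c_{k,3}=\gamma_k$).

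Next I would invoke Theorem~\ref{thm:maxquditToPauli} to put the Choi matrix of $\Phi=\mathcal{A}_\text{NCP}$ into Pauli form. Since $C_\Phi=(\Phi\otimes\mathcal{I})\op{\widetilde\Phi^+_d}$ and the theorem gives $\op{\widetilde\Phi^+_d}=\tfrac1d\sum_{\overline i}w_{\overline i}(\sigma_{\overline i})^{\otimes2}$, I expand $C_\Phi=\sum_{\overline i,\overline j}c_{\overline i\overline j}\,\sigma_{\overline i}\otimes\sigma_{\overline j}$ in the orthogonal Pauli basis and compose with the depolarizers acting on the output system, obtaining
\begin{align}
    C_{\widetilde{\Phi'}}=(\mathcal{L}'\otimes\mathcal{I})(C_\Phi)=\sum_{\overline i,\overline j}\lambda_{\overline i}\,c_{\overline i\overline j}\,\sigma_{\overline i}\otimes\sigma_{\overline j}.
\end{align}
The decisive point is the value of this matrix at $\alpha_k=\beta_k=\gamma_k=0$, where only $\lambda_{\overline 0}=1$ survives. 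Trace preservation of $\Phi$ forces $\tr_1 C_\Phi=\mathbb{I}$, which pins down $c_{\overline 0\,\overline j}=\tfrac1d\delta_{\overline j,\overline 0}$, so that $C_{\widetilde{\Phi'}}$ evaluated at the origin equals $\tfrac1d\,\mathbb{I}$, a strictly positive-definite matrix all of whose eigenvalues equal $1/d$.

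Then I would split $C_{\widetilde{\Phi'}}=\tfrac1d\mathbb{I}+R(\vec\lambda)$, where $R(\vec\lambda)=\sum_{\overline i\neq\overline 0,\,\overline j}\lambda_{\overline i}c_{\overline i\overline j}\,\sigma_{\overline i}\otimes\sigma_{\overline j}$ is Hermitian, traceless, depends polynomially on the parameters, and satisfies $R(\vec 0)=0$. Positivity of $C_{\widetilde{\Phi'}}$ is implied by the operator-norm bound $\lVert R(\vec\lambda)\rVert\le 1/d$, since $\langle v|C_{\widetilde{\Phi'}}|v\rangle\ge \tfrac1d-\lVert R\rVert$ for every unit vector $v$. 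Because $R$ is continuous and vanishes at the origin, this inequality holds on a nonempty open neighborhood of $\vec\lambda=\vec 0$; choosing any point of that neighborhood with all of $\alpha_k,\beta_k,\gamma_k$ small but nonzero yields a CP composition in which no direction is completely depolarized. On the same neighborhood the Pauli probabilities of each $\mathcal{L}_k$ stay near $1/4$ and hence remain nonnegative, so $\mathcal{L}'$ is itself a legitimate depolarizing channel.

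The main obstacle is not producing \emph{some} CP composition—complete depolarization does that trivially—but guaranteeing that the depolarization in \emph{every} direction can be kept strictly positive. The perturbation around the strictly positive-definite point $\tfrac1d\mathbb{I}$ is precisely what yields an open set of admissible nonzero parameters rather than an isolated solution, and the delicate bookkeeping is confirming, via Theorem~\ref{thm:maxquditToPauli} together with trace preservation, that the zeroth-order term is exactly $\tfrac1d\mathbb{I}$ so that this perturbative room is available. A constructive refinement, if desired, would replace the qualitative neighborhood by the explicit sufficient condition $\sum_{\overline i\neq\overline 0,\overline j}|\lambda_{\overline i}c_{\overline i\overline j}|\le 1/d$ (using $\lVert\sigma_{\overline i}\otimes\sigma_{\overline j}\rVert=1$) and optimizing the parameters against it.
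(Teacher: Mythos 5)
Your proposal is correct and follows essentially the same route as the paper's proof: expand the Choi matrix of the composition in the Pauli basis via Theorem~\ref{thm:maxquditToPauli}, use trace preservation to show the zeroth-order term is a positive multiple of the identity, and conclude by continuity that small nonzero depolarization parameters suffice. Your explicit operator-norm bound $\lVert R\rVert\le 1/d$ and the check that $\mathcal{L}'$ itself remains CP are slightly more quantitative than the paper's ``$J\to 0$'' continuity argument, but the underlying idea is identical.
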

\begin{proof}
We again make use of the Choi matrix. For a $n$ qubit system, we perform the ADM on each qubit individually. This action is equivalent to performing
\begin{align}
    \mathcal{L}'=\Pi_{i=1}^n\mathcal{L}_i\otimes \mathcal{I},
\end{align}
where $i$ represents the $i^\text{th}$ qubit. We can maximally depolarize by letting $\alpha_i=\beta_i=\gamma_i=0$. This depolarizing map is completely positive when the following inequalities are satisfied:
\begin{align}
    |\gamma_i \pm \alpha_i| &\leq 1 \pm \beta_i.
\end{align}
The action of $\mathcal{L}_i$ on the Pauli matrices are
    \begin{align}\label{eq:asymPauli}
        \mathcal{L}_i(\sigma_x)=\alpha_i\sigma_x,\qquad \mathcal{L}_i(\sigma_y)=\beta_i\sigma_y, \qquad\text{and}\qquad \mathcal{L}_i(\sigma_z)=\gamma_i\sigma_z.
    \end{align}
The identity matrix is left invariant. $\mathcal{A}$ is a mapping from $n$ qubits to $m$ qubits ($m$ may or may not equal $n$). Thus, $\mathcal{A}: \mathbb{C}^{2^{n}\times 2^{n}}\rightarrow \mathbb{C}^{2^m\times 2^m}$. Let $d=2^n$. The Choi matrix for the composition is
\begin{align}\label{eq:generalCPproof0}
    \notag\sum_{i=0}^{d-1}\mathcal{L'}\circ\mathcal{A}(\op{i}{j})\otimes\op{i}{j}&=\dfrac{1}{d}\sum_{i_1,i_2,...,i_{n}=\overline{0}}^{\overline{3}}w_{i_1,i_2,...,i_{n}}\mathcal{L'}\circ\mathcal{A}(\sigma_{i_1,i_2,...,i_{n}})\otimes\sigma_{i_1,i_2,...,i_{n}}\\
    \notag &=\dfrac{1}{d}\sum_{\overline{i}=\overline{0}}^{\overline{3}} w_{\overline i}\mathcal{L'}\circ\mathcal{A}(\sigma_{\overline i})\otimes\sigma_{\overline i}\\
    &=\dfrac{1}{d}\left[\mathcal{L'}\circ\mathcal{A}(\mathbb{I})\otimes\mathbb{I}+\sum_{\overline{i}\neq\overline{0}}^{\overline{3}} w_{\overline{i}}\mathcal{L'}\circ\mathcal{A}(\sigma_{\overline{i}})\otimes\sigma_{\overline{i}}\right],
\end{align}
where $\overline{0}=00...0$, $\overline i=i_1,i_2,...,i_{n}$, we used Theorem \ref{thm:maxquditToPauli} on the right hand side of the first line, and we explicitly separated the identity term on the last line. Like before, since $\mathcal{A}$ is trace preserving, and Hermitian matrices can be expanded in the basis of tensor of Pauli matrices including identity,
\begin{align}\label{eq:generalCPproof1}
        \mathcal{A}(\mathbb{I})=k\left(\mathbb{I}+\sum_{\overline{j}\neq \overline{0}}^{\overline{3}}(a_{\overline{j}}\sigma_{\overline{j}})_0\right)
    \end{align}
    and
    \begin{align}\label{eq:generalCPproof2}
        \mathcal{A}(\sigma_{\overline{i}})=\left(\sum_{\overline{j}\neq \overline{0}}^{\overline{3}}(a_{\overline{j}}\sigma_{\overline{j}})_{\overline{i}}\right),\qquad \forall \overline{i}\neq \overline{0},
    \end{align}
where $k=2^n/2^m$ is a positive nonzero constant to preserve the trace ($\mathcal{A}$ is trace preserving) and the $a_{\overline{j}}$'s are arbitrary constants. Substituting Eq. \eqref{eq:asymPauli}, Eq. \eqref{eq:generalCPproof1}, and Eq. \eqref{eq:generalCPproof2} into Eq. \eqref{eq:generalCPproof0}, we have
\begin{align}\label{eq:generalCPproof3}
    \notag\sum_{i=0}^{2^n-1}\mathcal{L'}\circ\mathcal{A}&(\op{i}{j})\otimes\op{i}{j}=\dfrac{1}{d}\left[\mathcal{L'}\circ\mathcal{A}(\mathbb{I})\otimes\mathbb{I}+\sum_{\overline{i}\neq\overline{0}}^{\overline{3}} w_{\overline{i}}\mathcal{L'}\circ\mathcal{A}(\sigma_{\overline{i}})\otimes\sigma_{\overline{i}}\right]\\
    \notag =&\dfrac{k}{d}\left[\mathcal{L'}\left[\mathbb{I}+\sum_{\overline{j}\neq \overline{0}}^{\overline{3}}(a_{\overline{j}}\sigma_{\overline{j}})_{\overline{0}}\right]\otimes\mathbb{I}+\dfrac{1}{k}\sum_{\overline{i}\neq \overline{0}}\left(w_{\overline{i}}\mathcal{L'}\left[\sum_{\overline{j}\neq \overline{0}}^{\overline{3}}(a_{\overline{j}}\sigma_{\overline{j}})_{\overline{i}}\right]\otimes\sigma_{\overline{i}}\right)\right]\\
    \notag=&\dfrac{k}{d}\left[\left(\mathbb{I}+\sum_{\overline{j}\neq \overline{0}}^{\overline{3}}\left[m_{\overline{j}}(a_{\overline{j}}\sigma_{\overline{j}})_{\overline{0}}\right]\right)\otimes\mathbb{I}+\dfrac{1}{k}\sum_{\overline{i}\neq \overline{0}}\left(w_{\overline{i}}\sum_{\overline{j}\neq \overline{0}}^{\overline{3}}\left[m_{\overline{j}}(a_{\overline{j}}\sigma_{\overline{j}})_{\overline{i}}\right]\otimes\sigma_{\overline{i}}\right)\right]\\
    =&\dfrac{k}{d}(\mathbb{I}\otimes\mathbb{I}+J),
\end{align}
where the $m_{\overline{j}}$'s are the products of the appropriate $\alpha_i, \beta_i$ and $\gamma_i$ constants from the local ADMs and $J$ is everything in between the square braces that is not $\mathbb{I}\otimes\mathbb{I}$. The Choi matrix Eq. \eqref{eq:generalCPproof3} becomes positive as $J$ goes to zero. We can arbitrarily shrink $J$ asymmetrically by decreasing $\alpha_i, \beta_i,$ and $\gamma_i$ independently. Thus, from continuity, with nonzero values of $\alpha_i, \beta_i,$ and $\gamma_i$, $\mathcal{L}'\circ\mathcal{A}$ is CP. 
\end{proof}
\begin{remark}
    The results also hold for NCP trace preserving qutrit maps. The Choi matrix for the composition can be written using $3\times 3$ Gell-Mann matrices from Theorem \ref{thm:maxquditToPauli} and the results follow from the same line of argument as the qubit case.
\end{remark}

A corollary follows directly from Theorem \ref{thm:NCP_to_CP} for single qubit maps. First we will give the explicit construction of the asymmetric depolarizing map for single qubits:
\begin{align}
    \label{Eq:asymmetric_depolarizing_map}\mathcal{L} &= 
    \op{\Phi^+}{\Phi^+} + \alpha \op{\Psi^+}{\Psi^+} + \beta \op{\Psi^-}{\Psi^-} + \gamma \op{\Phi^-}{\Phi^-}\\ 
    =&
    \dfrac{1}{2}
    \begin{bmatrix}
    1+\gamma &0 &0 &1-\gamma\\
    0 &\alpha+\beta &\alpha-\beta &0\\
    0 &\alpha-\beta &\alpha+\beta &0\\
    1-\gamma &0 &0 &1+\gamma
    \end{bmatrix} \qquad \text{for } \alpha, \beta, \gamma \in [-1,1].
\end{align}

\begin{corollary}
Let the single qubit asymmetric depolarization map be given by Eq. (\ref{Eq:asymmetric_depolarizing_map}) which has the OSR in Eq. (\ref{Eq:OSR_of_ADM}). Then let $\mathcal{A}$ be a single qubit trace preserving NCP map. The composition $\mathcal{L}\circ\mathcal{A}$ is CP for appropriate nonzero values of $\alpha, \beta,$ and $\gamma.$
\end{corollary}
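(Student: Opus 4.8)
The plan is to recognize that this Corollary is exactly the $n=1$ specialization of Theorem~\ref{thm:NCP_to_CP}, so in principle it follows simply by setting $n=1$ and dropping the qubit index from $\alpha_i,\beta_i,\gamma_i$. Nonetheless, because the single-qubit case admits a fully explicit $4\times 4$ Choi matrix, I would redo the argument concretely for $d=2$, both to make the statement self-contained and to expose exactly which parameters control positivity.

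First I would write the Choi matrix of the composition using Theorem~\ref{thm:maxquditToPauli} with $d=2$, for which $w_0=w_1=w_3=1$ and $w_2=-1$, giving
\begin{align}
    C_{\mathcal{L}\circ\mathcal{A}}=\frac{1}{2}\sum_{\overline i=0}^{3} w_{\overline i}\,\mathcal{L}\circ\mathcal{A}(\sigma_{\overline i})\otimes\sigma_{\overline i}.
\end{align}
Next, using that $\mathcal{A}$ is trace preserving and Hermiticity preserving, I would expand its outputs in the Pauli basis: trace preservation forces $\mathcal{A}(\mathbb{I})=\mathbb{I}+\sum_{j}a_{0j}\sigma_j$ (so $k=1$ here, since $m=n=1$), while $\mathcal{A}(\sigma_i)=\sum_j a_{ij}\sigma_j$ carries no identity component because each $\sigma_i$ is traceless. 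Applying $\mathcal{L}$ through Eq.~\eqref{eq:asymPauli}, every surviving $\sigma_x,\sigma_y,\sigma_z$ output picks up a factor $\alpha$, $\beta$, or $\gamma$ while the identity is left invariant, so the Choi matrix takes the form $C_{\mathcal{L}\circ\mathcal{A}}=\tfrac{1}{2}(\mathbb{I}\otimes\mathbb{I}+J)$, where every term of $J$ carries at least one factor of $\alpha$, $\beta$, or $\gamma$.

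The final step, and the only real subtlety, is the continuity argument. As $\alpha,\beta,\gamma\to 0$ the matrix $J\to 0$ and $C_{\mathcal{L}\circ\mathcal{A}}\to\tfrac12\mathbb{I}\otimes\mathbb{I}$, which is strictly positive (every eigenvalue equals $1/2$). Because the limiting matrix is positive definite rather than merely positive semidefinite, continuity of the eigenvalues guarantees an open ball of parameter values around the origin on which $C_{\mathcal{L}\circ\mathcal{A}}$ stays positive; one then selects any nonzero $(\alpha,\beta,\gamma)$ inside that ball. The hard part is really just making this clean and simultaneously checking that such small parameters still describe a legitimate (CP) depolarizing channel, i.e.\ that $|\gamma\pm\alpha|\le 1\pm\beta$ holds, which is automatic near the origin. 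I would also remark that the explicit $4\times 4$ form in Eq.~\eqref{Eq:asymmetric_depolarizing_map} lets one, if desired, read off sharper closed-form bounds on the admissible $\alpha,\beta,\gamma$ rather than relying only on continuity.
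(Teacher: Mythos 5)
Your proposal is correct and follows essentially the same route as the paper: expand the Choi matrix of $\mathcal{L}\circ\mathcal{A}$ in the Pauli form of $\op{\widetilde\Phi^+}$, use trace preservation to isolate the $\mathbb{I}\otimes\mathbb{I}$ term, observe that every remaining term carries a factor of $\alpha$, $\beta$, or $\gamma$, and conclude by continuity from the strictly positive limit $\tfrac12\mathbb{I}\otimes\mathbb{I}$. Your explicit remark that the limit is positive \emph{definite} (so eigenvalue continuity yields an open ball of admissible nonzero parameters) and your check of the Fujiwara--Algoet conditions make the continuity step slightly more airtight than the paper's phrasing, but the argument is the same.
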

\begin{proof}
This depolarizing map is completely positive when the following \textit{Fujiwara-Algoet} \cite{Fujiwara_1999OneToOneParamOfQuantChannels} conditions are satisfied:
\begin{align}
    |\gamma \pm \alpha| &\leq 1 \pm \beta.
\end{align}
We can maximally depolarize by letting $\alpha=\beta=\gamma=0$. We use the Choi matrix representation. The action of $\mathcal{L}$ on the Pauli matrices are
    \begin{align}\label{eq:singleAsymmPauli}
        \mathcal{L}(\sigma_x)=\alpha\sigma_x,\qquad \mathcal{L}(\sigma_y)=\beta\sigma_y, \qquad\text{and}\qquad \mathcal{L}(\sigma_z)=\gamma\sigma_z.
    \end{align}
    The ADM leaves the identity matrix invariant. Next, note that the unnormalized maximally entangled 2 qubit state $\op{\widetilde{\Phi}^+}$ is given by
    \begin{align}
        \op{\widetilde{\Phi}^+}=\dfrac{1}{2}(\mathbb{I}\otimes\mathbb{I}+\sigma_x\otimes\sigma_x-\sigma_y\otimes\sigma_y+\sigma_z\otimes\sigma_z).
    \end{align}
    We can now get to the main proof. The Choi matrix for $\mathcal{L}\circ\mathcal{A}$ is given by
    \begin{align}\label{eq:2qubitComp}
        \notag\sum_{i,j=0}^1\mathcal{L}\circ\mathcal{A}(\op{i}{j})\otimes\op{i}{j}=&\dfrac{1}{2}[\mathcal{L}\circ\mathcal{A}(\mathbb{I})\otimes\mathbb{I}+\mathcal{L}\circ\mathcal{A}(\sigma_x)\otimes\sigma_x\\
        &-\mathcal{L}\circ\mathcal{A}(\sigma_y)\otimes\sigma_y+\mathcal{L}\circ\mathcal{A}(\sigma_z)\otimes\sigma_z].
    \end{align}
    We then use the fact that $\mathcal{A}$ is trace preserving, the Pauli matrices are Hermitian, and a Hermitian matrix can be expanded in the basis of all possible tensors of Pauli matrices. We have
    \begin{align}\label{eq:2qubitproof1}
        \mathcal{A}(\mathbb{I})=\mathbb{I}+\sum_{k=1}^3(a_k\sigma_k)_0
    \end{align}
    and
    \begin{align}\label{eq:2qubitproof2}
        \mathcal{A}(\sigma_j)=\sum_{k=1}^3(a_k\sigma_k)_j,\qquad \forall j\in\{x,y,z\},
    \end{align}
    where the $a_k$'s are arbitrary constants. Substituting Eq. \eqref{eq:singleAsymmPauli}, Eq. \eqref{eq:2qubitproof1}, and Eq. \eqref{eq:2qubitproof2} into Eq. \eqref{eq:2qubitComp} we have
    \begin{align}\label{eq:2qubitproof3}
        \notag\sum_{i,j=0}^1\mathcal{L}\circ\mathcal{A}(\op{i}{j})\otimes\op{i}{j}=&\dfrac{1}{2}\left(\mathcal{L}\left[\mathbb{I}+\sum_{k=1}^3(a_k\sigma_k)_0\right]\otimes\mathbb{I}+\mathcal{L}\left[\sum_{k=1}^3(a_k\sigma_k)_x\right]\otimes\sigma_x\right.\\
        \notag &\left.-\mathcal{L}\left[\sum_{k=1}^3(a_k\sigma_k)_y\right]\otimes\sigma_y+\mathcal{L}\left[\sum_{k=1}^3(a_k\sigma_k)_z\right]\otimes\sigma_z \right)\\
        \notag=&\dfrac{1}{2}\left[\mathbb{I}\otimes\mathbb{I}+\sum_{k=1}^3m_k(a_k\sigma_k)_0\otimes\mathbb{I}+\sum_{k=1}^3m_k(a_k\sigma_k)_x\otimes\sigma_x\right.\\
        \notag&\left. -\sum_{k=1}^3m_k(a_k\sigma_k)_y\otimes\sigma_y+\sum_{k=1}^3m_k(a_k\sigma_k)_z\otimes\sigma_z\right]\\
        &=\dfrac{1}{2}(\mathbb{I}\otimes\mathbb{I}+J),
    \end{align}
where $m_k$ corresponds to $\alpha, \beta,$ or $\gamma$ and $J$ is everything inside the square braces except the $\mathbb{I}\otimes \mathbb{I}$ term. The Choi matrix, Eq. \eqref{eq:2qubitproof3}, becomes positive as $J$ goes to zero. We can arbitrarily shrink $J$ asymmetrically by decreasing the values of $\alpha, \beta, $ and $\gamma$ independently. Thus, from continuity, with nonzero values of $\alpha, \beta, $ and $\gamma$, $\mathcal{L}\circ\mathcal{A}$ is CP for some choice of $\alpha, \beta, $ and $\gamma$.
\end{proof}

It has been shown that the volume of NCP maps is twice the volume as CP maps for Pauli channels when subjected to a full positivity domain \cite{Jagadish_2019MeasurePauliMaps}. There may be valid A-matrices that have an associated B-matrix that is Hermitian and trace 2, such that there is an intersection between the initial domain and the image space, but the physical process may be impossible to achieve. In \ref{App:3}, we show that such an example may exist. There is always an intersection between the image space and the initial domain of the Bloch sphere, but the A-matrix has the extreme property of mapping the resultant state's eigenvalues of invalid initial density operators toward $\pm \infty$. This suggests that no physical process can accomplish such a task even though it satisfies all the properties of being a valid NCP mapping.

\subsection{Asymmetric Depolarization Map Measures}\label{sec:admmeas}
From this point on, we let $\mathcal{L}=\Pi_{i=1}^n\mathcal{L}_i\otimes \mathcal{I}$ or the single qubit ADM. When we perform ADM, we generally change the state's Bloch vector. Therefore, it is important to measure how much the state is changed. There are multiple measures we can use. First, we give our ADM measure $M_1$
\begin{align}
\label{Eq:M_1_measure}
    M_1:=\dfrac{1}{3}(\abs{\alpha}+\abs{\beta}+\abs{\gamma}),
\end{align}
which is basically just an averaging over the absolute values of the depolarizing parameters. The range of $M_1$ is $0\le M_1\le 1$. At the two extremes, we have total loss of the Bloch vector and the magnitude of the Bloch vector unchanged for the values of $0$ and $1$, respectively. For $n$ qubits, this measure generalizes in the obvious way to
\begin{align}\label{Eq:M_1_measuregeneral}
    M_1:=\dfrac{1}{3n}\left(\sum_{i=1}^{3n}\abs{c_i}\right),
\end{align}
where the $c_i$'s are the depolarization parameters. The $M_1$ measure for the symmetric depolarizing channel is simply given by a single depolarizing parameter that we can call $\tau$ that is between $0$ and $1$.

For single qubits, we also have the closed form for measures using fidelity and the diamond norm. We can calculate the fidelity between the initial state and the final state for the asymmetric depolarizer when it is composed with some error map $\mathcal{A}$. Define the matrix
\begin{align}
    D =
    \left(
    \begin{array}{ccc}
       \alpha & 0 & 0 \\
        0 & \beta & 0 \\
        0 & 0 & \gamma
    \end{array}
    \right).
\end{align}
We can define two fidelity measures. The first fidelity is
\begin{align}\label{eq:fidelityErrorCorr}
    F(\rho, \mathcal{L} \circ \mathcal{A} \rho) = \dfrac{1}{2}\bigg\{1+ \vec{r}\cdot D \vec{r} \;' + \sqrt{(1-\vec{r} \cdot \vec{r})(1-\vec{r} \; ' \cdot D^2 \vec{r} \; ')} \bigg\},
\end{align}
where $\vec{r}$ is the Bloch vector of the initial state $\rho$ and $\vec{r} \;'$ is the Bloch vector of the state $\mathcal{A} \rho$. This fidelity is good for error correction since we could minimize the effect of the composition while ensuring complete positivity of the composition. For the symmetric depolarizer, $D=\tau I$. If we are interested in approximating the $A$-matrix while ensuring a CP composition with the asymmetric depolarizer, then we want to use the calculation
\begin{align}\label{eq:fidelityApproxA}
    F(\mathcal{A} \rho, \mathcal{L} \circ \mathcal{A} \rho) = \dfrac{1}{2}\bigg\{1+ \vec{r} \; '\cdot D \vec{r} \;' + \sqrt{(1-\vec{r} \; ' \cdot \vec{r} \; ')(1-\vec{r} \; ' \cdot D^2 \vec{r}\;')} \bigg\}.
\end{align}

When comparing the gate fidelity of two Pauli channels, the general equation is given by the diamond norm \cite{Benenti_2010} and has the analytical form of
\begin{align}
|| \mathcal{E}_1 (\rho) - \mathcal{E}_2 (\rho) ||_\diamond = \sum_i | \kappa^{(1)}_i - \kappa^{(2)}_i |
\end{align}
where each channel has the form
\begin{align}
    \mathcal{E}_j (\rho) = \sum_i \kappa^{(j)}_i \hat{\sigma}_i \rho \hat{\sigma}_i \text{ such that } \sum_i \kappa^{(j)}_i = 1 \text{ for } j = 1,2.
\end{align}
In the case where $\mathcal{E}_1 (\rho)$ is the asymmetric depolarizing channel and $\mathcal{E}_2 (\rho)$ is the symmetric depolarizing channel, we have that the diamond norm $|| \mathcal{E}_1 (\rho) - \mathcal{E}_2 (\rho) ||_\diamond$ is equal to
\begin{align}\label{eq:diamondNormAsymWithSym}
    \dfrac{1}{4} \bigg\{ |\alpha+\beta+\gamma-3\tau|+|\alpha-\beta-\gamma+\tau|+|\alpha-\beta+\gamma-\tau|+|\alpha+\beta-\gamma-\tau| \bigg\}.
\end{align}
The parameters $\alpha, \beta,$ and $\gamma$ describe the asymmetric depolarizer and the parameter $\tau$ describes the symmetric one. The distance between the asymmetric depolarizer and symmetric depolarizer gives us a measure of the advantage of the asymmetric depolarizer because the performance of the asymmetric depolarizer is generally lower bounded by the symmetric depolarizer.



\subsection{ADM Composition vs SPA: Information Loss}
We would like to show that the composition with the asymmetric depolarizer is generally better than the SPA at retaining information. As was shown in the introduction, the SPA is equivalent to composition with a symmetric depolarizer. The natural way to compare the two compositions would be to use mutual information and show that $I(\mathcal{L}\circ\mathcal{A}(\rho);\mathcal{A}(\rho))\geq I(\mathcal{L}_\text{symm}\circ\mathcal{A}(\rho);\mathcal{A}(\rho))$, where $\mathcal{L}$ is the asymmetric depolarizer, $\mathcal{L}_\text{symm}$ is the symmetric depolarizer, $\mathcal{A}$ is the input NCP A-matrix, and $I(A;B)$ is the mutual information for the systems $A$ and $B$. However, since this calculates the mutual information for product states, i.e., $A$ and $B$ are not correlated, these values are zero. Instead, we bound the linear entropy $S_L(\rho)=1-\tr(\rho^2)$ \cite{Peters2004MixedStateSensitivity}.
\begin{lemma}
    Let $\mathcal{A}$ be a trace preserving NCP map; $\tau$ be the depolarization parameter for $\mathcal{L}_\text{symm}$; $c_i$'s be the depolarization parameters for $\mathcal{L}$; and $\mathcal{L}\circ\mathcal{A}$ and $\mathcal{L}_\text{symm}\circ\mathcal{A}$ be CP with the minimum depolarization required (minimum depolarization is given by maximizing $M_1$ with $c_i\geq \tau$ $\forall i$). Then, the linear entropy is bounded by
    \begin{align}
        S_L(\mathcal{L}_\text{symm}\circ\mathcal{A}(\rho))\geq S_L(\mathcal{L}\circ\mathcal{A}(\rho))\quad \forall \rho.
    \end{align}
\end{lemma}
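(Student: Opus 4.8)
The plan is to recast the claim purely in terms of purity. Since $S_L(\rho)=1-\tr(\rho^2)$, the stated inequality $S_L(\mathcal{L}_\text{symm}\circ\mathcal{A}(\rho))\ge S_L(\mathcal{L}\circ\mathcal{A}(\rho))$ is equivalent to showing that the asymmetric composition never loses more purity than the symmetric one, i.e.
\begin{align}
\tr\!\big((\mathcal{L}\circ\mathcal{A}(\rho))^2\big)\ge \tr\!\big((\mathcal{L}_\text{symm}\circ\mathcal{A}(\rho))^2\big)\qquad\forall\rho.
\end{align}
I would prove this by working entirely in the tensor-Pauli basis, where both depolarizers act diagonally on the coefficients and the comparison becomes termwise.

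First I would expand the output of the error map as $\mathcal{A}(\rho)=\tfrac{1}{2^n}\sum_{\overline i} s_{\overline i}\,\sigma_{\overline i}$, with $s_{\overline 0}=1$ fixed by trace preservation. Using $\tr(\sigma_{\overline i}\sigma_{\overline j})=2^n\delta_{\overline i\overline j}$, the purity of any $n$-qubit state collapses to $\tr(\rho^2)=\tfrac{1}{2^n}\sum_{\overline i}s_{\overline i}^2$, so purity is just the squared $\ell_2$ norm of the Pauli-coefficient vector. The next step is to record how each depolarizer rescales these coefficients. By Eq.~\eqref{eq:asymPauli} the local ADM on qubit $k$ multiplies a factor $\sigma_{i_k}$ by $\alpha_k$, $\beta_k$, or $\gamma_k$ and leaves identity alone, so $\mathcal{L}\circ\mathcal{A}(\rho)$ has coefficient $m_{\overline i}\,s_{\overline i}$ with $m_{\overline i}=\prod_{k:\,i_k\ne 0}c_{k,i_k}$, a product of $w(\overline i)$ depolarizing parameters, where $w(\overline i)$ is the Pauli weight (the number of non-identity tensor factors). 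The symmetric channel taken with the same per-qubit product structure and common parameter $\tau$ replaces every such factor by $\tau$, giving coefficient $\tau^{w(\overline i)}s_{\overline i}$.

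With both purities written as sums over $\overline i$, the inequality becomes termwise: I would show $m_{\overline i}^2\ge \tau^{2w(\overline i)}$ for every $\overline i$. This is where the hypothesis enters. The constraint $c_i\ge\tau$ (with $\tau\ge 0$) forces every factor appearing in $m_{\overline i}$ to be nonnegative and at least $\tau$, so $m_{\overline i}=\prod_{k:\,i_k\ne0}c_{k,i_k}\ge\tau^{w(\overline i)}\ge 0$, and squaring preserves the inequality. Summing $s_{\overline i}^2\big(m_{\overline i}^2-\tau^{2w(\overline i)}\big)\ge 0$ over all $\overline i$ and dividing by $2^n$ yields the purity inequality, and hence the linear-entropy inequality, for all $\rho$; the $\overline i=\overline 0$ term contributes equally to both sides and drops out.

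The only genuinely delicate point is the bookkeeping of the scaling factors: one must verify that the minimum-depolarization regime keeps all parameters nonnegative, so that $c_i\ge\tau$ really gives $c_i^2\ge\tau^2$ in each direction, and that the symmetric channel carries the same per-qubit product structure as $\mathcal{L}$, so that its coefficient multiplier is exactly $\tau^{w(\overline i)}$; a single global symmetric factor would break the termwise domination at high Pauli weight. For a single qubit these issues vanish, since every nonidentity coefficient has weight one and the inequality reduces immediately to $|\alpha|,|\beta|,|\gamma|\ge\tau$, which is precisely $c_i\ge\tau$. I would therefore establish the single-qubit case first and then lift it to $n$ qubits through the factorization above. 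Observe that the maximization of $M_1$ is not actually needed for the bound --- only the constraint $c_i\ge\tau$ is used --- so the conclusion holds for any asymmetric depolarizer that dominates the symmetric one direction by direction.
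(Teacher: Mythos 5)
Your proposal is correct and follows the same route as the paper's proof: reduce the linear-entropy claim to a purity comparison and use $c_i\ge\tau$ to conclude that the asymmetric composition retains at least as much purity as the symmetric one. The paper's own proof merely asserts this monotonicity after noting that the two Choi matrices coincide when all parameters equal $\tau$, whereas your Pauli-basis expansion, the termwise bound $m_{\overline i}^2\ge\tau^{2w(\overline i)}$, and your explicit caveat that $\mathcal{L}_\text{symm}$ must carry the same per-qubit product structure (consistent with the paper's remark that the ADM with equal parameters matches the SPA only when \emph{local} depolarizers are used) supply exactly the bookkeeping the paper leaves implicit.
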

\begin{remark}
    If the depolarization parameters are \textit{not} constrained by $c_i\geq \tau$ $\forall i$, the result does not hold as shown by the counter examples in Subsection \ref{subsec:ex3}. Also, note that the linear entropy is a good measure for qubits but can be a misleading measure for higher dimensional states due to the possible distribution of the eigenvalues.
\end{remark}
\begin{proof}
    The minimum depolarization for the ADM composition $\mathcal{L}\circ\mathcal{A}$ can be bounded by going back to the positivity of the Choi matrix of the compositions. 
    Notice that the Choi matrix for the SPA is equivalent to the Choi matrix for the ADM composition when all the depolarizing parameters are equal (this must be true because when the depolarizing parameters are all equal we are symmetrically depolarizing). Then, the composition is CP for some optimal symmetric depolarization value $c$. At this value, the purity of the output of the two compositions are the same and the symmetric depolarization used for the SPA is optimal. Since $c_i\geq \tau$ $\forall i$, the purity of the output of the ADM composition must be greater than or equal to the purity of the output of the SPA and the bound follows.
\end{proof}

\subsection{Composition with ADM as a Structural Physical Approximation}
Can the composition $\mathcal{L}\circ\mathcal{A}$ be considered a structural physical approximation of the NCP A-matrix? Structural physical approximations should preserve the structure of the original A-matrix by preserving the direction of the Bloch vector of the output of the original NCP A-matrix \cite{horodecki2001limits}. The ADM does not in general preserve the direction of the Bloch vector. When the ADM depolarizes symmetrically, the composition is completely equivalent to the SPA if local depolarizers are also used for the SPA.

\section{Examples}
We give some examples of NCP maps and their corresponding ADMs that convert them to CP maps. The first two examples are detailed and for the last two examples we just give the corresponding ADM. Note that we let $\mathcal{L}=\mathcal{L}(\alpha,\beta,\gamma)$ and $\mathcal{T}=\mathcal{T}(x_0,y_0,z_0)$ (the translation matrix defined in Eq. \eqref{eq:genTranslation}).

\subsection{Example 1}
This example illustrates the advantage of asymmetric over symmetric depolarization with regards to making CP maps. In general, a single qubit trace-preserving map has the form \cite{sudarshan_1961}
\begin{align}
    A = 
    \left[
    \begin{array}{cccc}
     a & b & b^* & d  \\
     e & f & g & h \\
     e^* & g^* & f^* & h^* \\
     1-a & -b & -b^* & 1-d
    \end{array}
    \right].
\end{align}
For this example, let us use the NCP translation matrix
\begin{align}
    \mathcal{T}(0, 0, 1/2) = \dfrac{1}{4}
    \left[
    \begin{array}{cccc}
     5 & 0 & 0 & 1 \\
     0 & 4 & 0 & 0 \\
     0 & 0 & 4 & 0 \\
     -1 & 0 & 0 & 3 
    \end{array}
    \right]
\end{align}
so that the Bloch sphere gets shifted up from its initial position along the $z$-axis by $1/2$ as seen in Fig. (\ref{Fig:NCP_to_CP_via_asym}). 
Intuitively, the asymmetric depolarizer that will work best (as given by our measures in Section \ref{sec:admmeas}), for making this a valid CP map by composition, is given by shrinking the $z$ component of the Bloch vector by $2/3$. Then, the $\alpha$ and $\beta$ components are $\sqrt{2/3}$. Note that we get a CP composition when we depolarize the $z$ component by $2/3$ and $\alpha,\beta \in [0,\sqrt{2/3}]$. We obviously would choose $\alpha=\beta=\sqrt{2/3}$ because we would like to depolarize by the least amount possible. 

The symmetric depolarizer that will work best, for making this a valid CP map by composition, is given by shrinking the $z$ component of the Bloch vector by $2/3$ which also shrinks the $x$ and $y$ components by the same proportion. From our $M_1$ measure we can see that the ADM composition performs  better than the SPA. $M_1=2/3$ for the symmetric depolarizer compared to $M_1=(2+2\sqrt{6})/9 \approx 0.767 > 0.667$ for the ADM. In terms of the Bloch sphere, the ADM shrinks the image space of the transpose into an ellipsoid, while the symmetric depolarizer retains the image shape of a sphere. The volume of the image space for the ADM composition is always greater than the volume of the SPA image space. However, the symmetric depolarizer maintains the shape of the image space of the original NCP map, but the ADM does not.

The eigenvalues of the resultant dynamical B-matrix for
\begin{align}
    \mathcal{L}(\sqrt{2/3}, \sqrt{2/3}, 2/3) \circ \mathcal{T}(0, 0, 1/2)
\end{align}
are $\{ 5/3, 1/3, 0, 0 \}$ and for 
\begin{align}
    \mathcal{L}(\sqrt{2/3}, \sqrt{2/3}, 2/3)
\end{align} are $\{ (1/6)(5+2 \sqrt{6}), 1/6, 1/6, (1/6)(5-2 \sqrt{6}) \}$. 
For the SPA, the eigenvalues of the resultant dynamical B-matrix
\begin{align}
    \mathcal{L}(2/3, 2/3, 2/3) \circ \mathcal{T}(0, 0, 1/2)
\end{align} are $\{ (1/6)(5+\sqrt{17}), 1/3, (1/6)(5-\sqrt{17}),0 \}$ and for
\begin{align}
    \mathcal{L}(2/3, 2/3, 2/3)
\end{align} are $\{ (3/2, 1/6, 1/6, 1/6) \}$.
\begin{figure}[h]
    \centering
    \includegraphics[scale=.5]{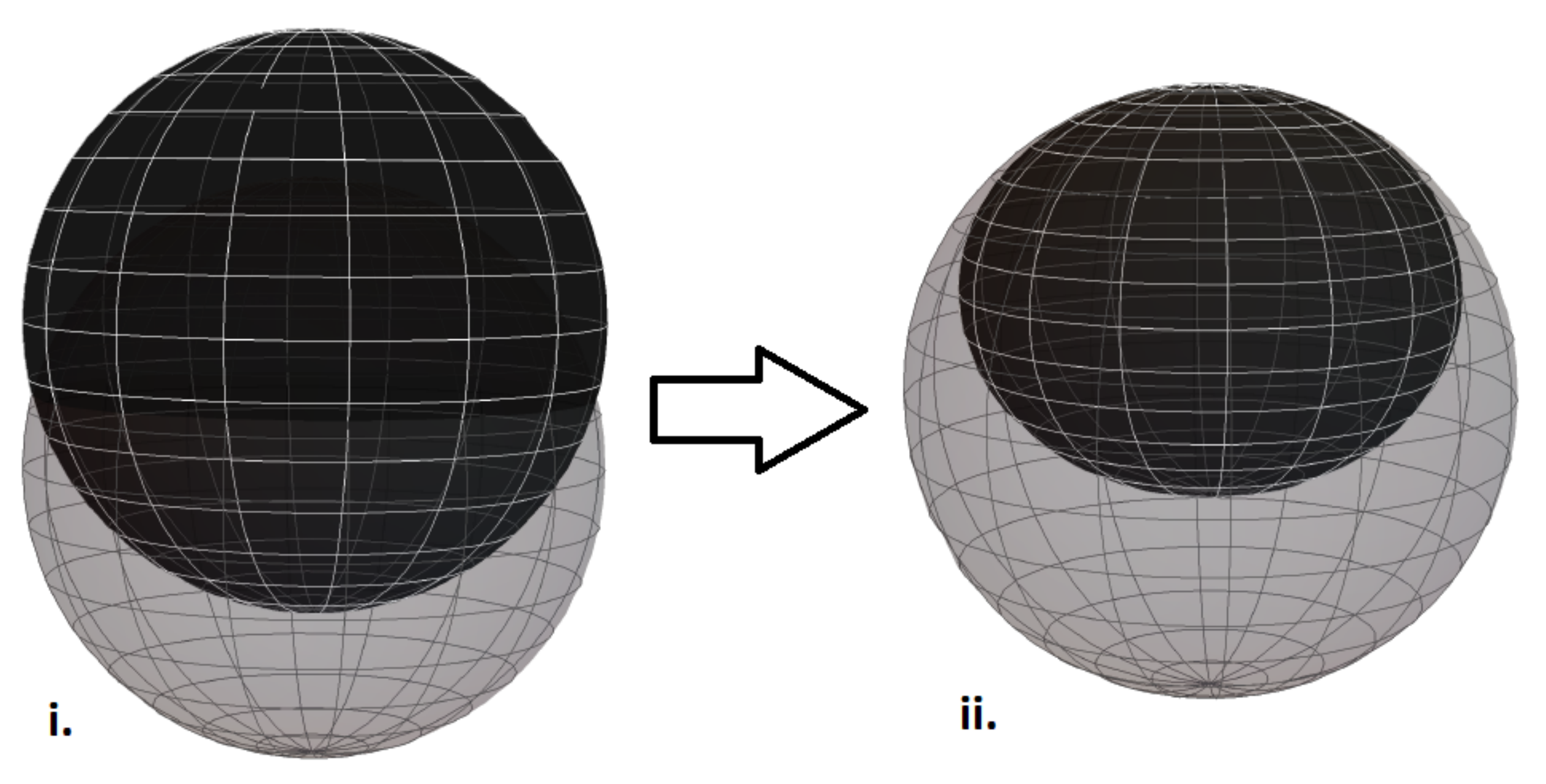}
    \caption{The first picture illustrates the action of the NCP translation map on an arbitrary qubit on the Bloch sphere. The lighter image represents the original Bloch sphere and the darker sphere is the image of the translation map along the $z$-axis. The second image demonstrates how the ellipsoid positivity domain of the final state becomes completely inscribed inside the original Bloch sphere after the asymmetric depolarizing map is composed with the translation map. Therefore, we have a final CP map.}
    \label{Fig:NCP_to_CP_via_asym}
\end{figure} 

Notice how both the composition map and also the $\mathcal{L}$ map must have non-negative eigenvalues to have a valid implementable protocol. In general, the minimum degree of depolarization can be obtained in a numerical optimization program that maximizes the $M_1$ measure in Eq. (\ref{Eq:M_1_measure}) while leaving both sets of eigenvalues non-negative. A calculation of the fidelity in Eq. (\ref{eq:fidelityErrorCorr}) between initial pure states and final states for the ADM composition and the SPA are given by
\begin{align}
    \mathcal{F}_{\text{ADM}}(\rho, \mathcal{L} \circ \mathcal{T}(\rho)) = \dfrac{1}{12} \left( 8 + \sqrt{6} + 2 \; \text{cos}(\theta) + (2 - \sqrt{6}) \; \text{cos}(2 \theta) \right),
\end{align}
and
\begin{align}
    \mathcal{F}_{\text{SPA}}(\rho, \mathcal{L} \circ \mathcal{T}(\rho))  = \dfrac{1}{6} \bigg( 5 + \text{cos}(\theta) \bigg)
\end{align}
which are only dependent on the polar angle $\theta$. The difference of the ADM composition and SPA fidelity measures is given by $(1/6)(\sqrt{6}-2)\text{sin}(\theta)^2$ which is always greater than or equal to zero. Thus, the ADM is better at state preservation than the symmetric depolarizer as illustrated in Figure (\ref{Fig:Fidelity_NCP_to_CP}).  
\begin{figure}[h!]
    \centering
    \includegraphics[scale=.7]{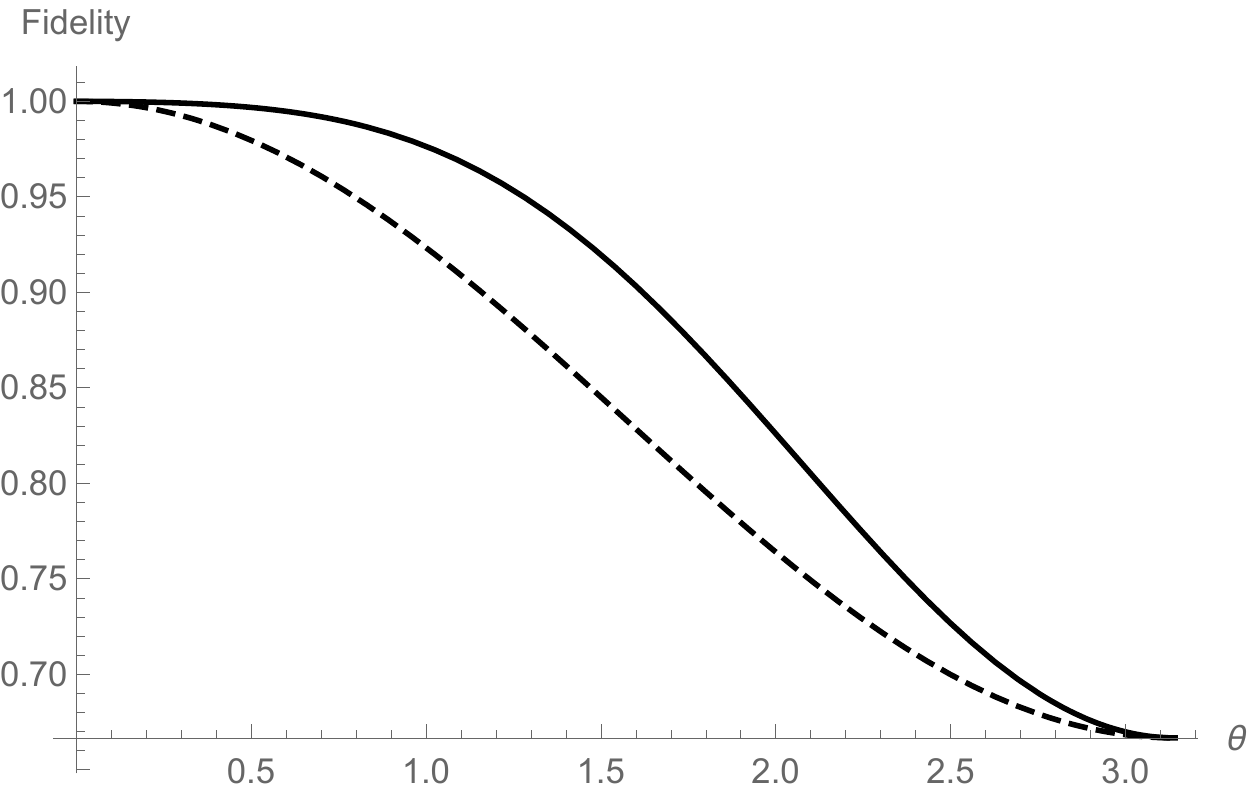}
    \caption{The solid line gives the fidelity, with respect to the parameter $\theta$, between the output of the ADM composition with the initial state, when the initial states are pure. The dashed line represents the fidelity for the SPA with the initial state. The lowest achievable fidelity in each case is 2/3 which occurs at $\theta = \pi$, but for any other value of $\theta$ corresponding to non-unit fidelity, we see that the ADM composition has a higher fidelity than the SPA.}
    \label{Fig:Fidelity_NCP_to_CP}
\end{figure}
The gate fidelity in Eq. \eqref{eq:diamondNormAsymWithSym} between the two depolarizers is $(1/3)(\sqrt{6}-2) \approx .15$. Interesting enough, this value is double the maximum difference of the fidelity of the best ADM composition and SPA set-ups. \\

\section{Conclusion}
In this paper, we studied transforming a NCP map to a CP map through composition with the asymmetric depolarizing map (ADM). The ADM acts as a super-superoperator that maps the input NCP map to a CP map. This problem is similar to structural physical approximation. However, in SPA, symmetric depolarization is required. We found that a trace preserving NCP map on qubits can always be made CP by acting on the individual qubits with local ADMs. The proof uses an isomorphism between a maximally entangled $2$ qudit state and $n$ qubits. Furthermore, we prove that these local ADMs never have to completely depolarize in any direction and are not required to be symmetric.

Note that a global ADM can preserve the structure of the input NCP map better than local ADMs on the individual qubits. This is due to the fact that the depolarization factors from the multiple local depolarizers multiply on the correlation matrix. Thus, the correlation matrix shrinks more rapidly than the local Bloch, or polarization, vectors. However, experimentally implementing a global ADM is very difficult for a large number of qubits. Next, we defined measures of the ADM, which tells us how much the input state into the ADM is disturbed. Under these measures, the ADM composition has advantages over SPA in preserving the structure of the original map.

Next, we gave examples. One interesting example (in Eq. \eqref{Eq:A_Robust}), shows that there exist valid A-matrices that are robust against asymmetric depolarization and completely depolarizing becomes \text{almost} necessary. We also showed with another example (in Eq. \eqref{eq:Ex:maxDepolInSomeDirs}) that, under the measure we called $M_1$ and also the fidelity, sometimes completely depolarizing in one direction (while not necessary) results in a better approximation. For qutrit maps, the task of transforming a NCP map to a CP map via local ADMs is unproven. Also, we did not consider the positivity domain since our method did not rely on it. In general, an induced NCP map will have a valid domain in which the subsystem can be contained in \cite{Rodriguez_Rosario_2008,Dominy_2016,Vacchini_2016}. We leave these topics for future work.

Finally, note that in some cases, the composition of the initial NCP map with an asymmetric depolarizing map may or may not be CP divisible depending on the strength of the depolarizer. This can be seen from the extremes. With no depolarizing, the composition is equal to the initial NCP map, which is trivially not CP divisible. For NCP maps that commute with the ADM, when we depolarize completely, the composition is equivalent to the completely depolarizing map, which is CP divisible. From continuity, there must be some depolarization values that cause the transition from not CP divisible to CP divisible. This transition may be useful as a measure of non-Markovianity. This is similar to the measure of non-Markovianity defined by De Santis and Giovannetti \cite{DeSantis_2021_MeasNonMarkovViaIncoherentMixWithMarkovDynamics}, but their measure is based on a mixture of an initial non-Markovian map with a Markovian map. We leave this for future work.

\section*{Acknowledgments}

Funding for this research was provided by the NSF, MPS under award number PHYS-1820870.  The authors thank Purva Thakre, Eric Chitambar, and Dario De Santis for helpful comments.  


\appendix

\section{}

\subsection{Extending the A-matrix to Higher Dimensions}
\label{Sec:extendingA-matrix}

One might think that a valid A-matrix $\mathcal{A}$ is trivially extensible to higher dimensions in the SMR representation by $\mathcal{A}\otimes \mathcal{I}$. However, this is not true as we show by a simple counter example below. If we want the tensor product of two A-matrices $\mathcal{A}_1 \otimes \mathcal{A}_2$ to have the correct action when operating on a two-qubit density operator, we need to focus our attention to the operator-sum representation. By definition, when an A-matrix acts on the vectorized form of a single qubit density operator, the resultant state is the same thing we would have gotten by sending the qubit through a channel expressed in OSR; that is,
\begin{align}\label{Eq:Qubit_vectoized}
    (\mathcal{A} \rho_v)_d = \sum_i E_i \rho E_i = \mathcal{E}(\rho),
\end{align}
where the subscripts $d$ and $v$ represent density and vector forms of an operator, respectively. An arbitrary two-qubit state has the form
\begin{align}
    \rho^{AB} = \dfrac{1}{4} \bigg\{ \mathbb{I} \otimes \mathbb{I} + \vec{r} \cdot \vec{\sigma} \otimes \mathbb{I} + \mathbb{I} \otimes \vec{s} \cdot \vec{\sigma} + \sum_{i,j=1}^3 t_{ij} \hat{\sigma}_i \otimes \hat{\sigma}_j \bigg\},
\end{align}
which when acted upon by $\mathcal{A}_1 \otimes \mathcal{A}_2$ becomes
\begin{align}
    &\rho^{AB} = (\mathcal{A}_1 \otimes \mathcal{A}_2) \dfrac{1}{4} \bigg\{ (\mathbb{I})_v \otimes (\mathbb{I})_v + (\vec{r} \cdot \vec{\sigma})_v \otimes (\mathbb{I} )_v+ (\mathbb{I})_v \otimes (\vec{s} \cdot \vec{\sigma})_v + \sum_{i,j=1}^3 t_{ij} (\hat{\sigma}_i)_v \otimes (\hat{\sigma}_j)_v \bigg\} \\
    &= \dfrac{1}{4} \bigg\{ (\mathbb{I})_v \otimes (\mathbb{I})_v + \mathcal{A}_1 (\vec{r} \cdot \vec{\sigma})_v \otimes (\mathbb{I} )_v+ (\mathbb{I})_v \otimes \mathcal{A}_2 (\vec{s} \cdot \vec{\sigma})_v + \sum_{i,j=1}^3 t_{ij} \mathcal{A}_1 (\hat{\sigma}_i)_v \otimes \mathcal{A}_2 (\hat{\sigma}_j)_v \bigg\}. \nonumber
\end{align}

The reason we vectorized all of the $2 \times 2$ sub-matrices in this state is that it simulates Eq. (\ref{Eq:Qubit_vectoized}) exactly on the subsystems. Thus, it is the same as if we derived the operator-sum decomposition for $\mathcal{A}_1$ and $\mathcal{A}_2$ and then acted upon the state $\rho_{AB}$ with $\mathcal{E}_1 (.) \otimes \mathcal{E}_2(.)$. A really nice structure occurs when calculating the resultant vectorized state for a two-qubit system. It turns out that if our state initially has the representation 
\begin{align}
    \rho^{AB} = 
    \left(
    \begin{array}{cc}
        A & B \\
        C & D
    \end{array}
    \right),
\end{align}
then the correct vectorized form of it is given by
\begin{align}
\label{Eq:New_vectorization}
\rho^{AB}_v =
\left(
\begin{array}{c}
     A_v \\
     B_v \\
     C_v \\
     D_v
\end{array}
\right).
\end{align}
Therefore, we can use the normal extension $\mathcal{A}_1 \otimes \mathcal{A}_2$ if we vectorize our two-qubit state locally.

Generally, we can simply perform a rotation on some general A-matrix $\mathcal{W}$ so that its rows act on $\rho_v$ the same way $\mathcal{A}_1 \otimes \mathcal{A}_2$ does. This equates to the exchange of the rows $3 \leftrightarrow 5, 4 \leftrightarrow 6, 11 \leftrightarrow 13, \text{and } 12 \leftrightarrow 14$, which is essentially a permutation matrix of the following form:
\begin{align}
    \label{Eq:permutation}
    R =
    \left(
    \begin{array}{cccccccccccccccc}
        1 & 0 & 0 & 0 & 0 & 0 & 0 & 0 & 0 & 0 & 0 & 0 & 0 & 0 & 0 & 0 \\
        0 & 1 & 0 & 0 & 0 & 0 & 0 & 0 & 0 & 0 & 0 & 0 & 0 & 0 & 0 & 0 \\
        0 & 0 & 0 & 0 & 1 & 0 & 0 & 0 & 0 & 0 & 0 & 0 & 0 & 0 & 0 & 0 \\
        0 & 0 & 0 & 0 & 0 & 1 & 0 & 0 & 0 & 0 & 0 & 0 & 0 & 0 & 0 & 0 \\
        0 & 0 & 1 & 0 & 0 & 0 & 0 & 0 & 0 & 0 & 0 & 0 & 0 & 0 & 0 & 0 \\
        0 & 0 & 0 & 1 & 0 & 0 & 0 & 0 & 0 & 0 & 0 & 0 & 0 & 0 & 0 & 0 \\
        0 & 0 & 0 & 0 & 0 & 0 & 1 & 0 & 0 & 0 & 0 & 0 & 0 & 0 & 0 & 0 \\
        0 & 0 & 0 & 0 & 0 & 0 & 0 & 1 & 0 & 0 & 0 & 0 & 0 & 0 & 0 & 0 \\
        0 & 0 & 0 & 0 & 0 & 0 & 0 & 0 & 1 & 0 & 0 & 0 & 0 & 0 & 0 & 0 \\
        0 & 0 & 0 & 0 & 0 & 0 & 0 & 0 & 0 & 1 & 0 & 0 & 0 & 0 & 0 & 0 \\
        0 & 0 & 0 & 0 & 0 & 0 & 0 & 0 & 0 & 0 & 0 & 0 & 1 & 0 & 0 & 0 \\
        0 & 0 & 0 & 0 & 0 & 0 & 0 & 0 & 0 & 0 & 0 & 0 & 0 & 1 & 0 & 0 \\
        0 & 0 & 0 & 0 & 0 & 0 & 0 & 0 & 0 & 0 & 1 & 0 & 0 & 0 & 0 & 0 \\
        0 & 0 & 0 & 0 & 0 & 0 & 0 & 0 & 0 & 0 & 0 & 1 & 0 & 0 & 0 & 0 \\
        0 & 0 & 0 & 0 & 0 & 0 & 0 & 0 & 0 & 0 & 0 & 0 & 0 & 0 & 1 & 0 \\
        0 & 0 & 0 & 0 & 0 & 0 & 0 & 0 & 0 & 0 & 0 & 0 & 0 & 0 & 0 & 1 
    \end{array}
    \right).
\end{align}
Therefore, if we want to use the same vectorization, specifically the one given in Eq. (\ref{Eq:New_vectorization}), for $\rho_v$ when acted on by $\mathcal{W}$, we simply apply the rotation $R$ such that $\mathcal{W} \rightarrow R \mathcal{W} R^T$. By doing this, we can always use the vectorization in Eq. (\ref{Eq:New_vectorization}) to obtain the correct result. For any amount of qubits $d$, there exists a permutation matrix that will always lead to the same vectorization method for any $\mathcal{A}_1 \otimes \mathcal{A}_2 \otimes \cdots \otimes \mathcal{A}_d$. 
Hence, the two-qubit A-matrix of $\mathcal{A}_1 \otimes \mathcal{A}_2$ is given by $R \mathcal{A}_1 \otimes \mathcal{A}_2 R$. Note that for this particular permutation matrix $R = R^T$ and $R^2 = \mathbb{I}$.

The following example extends a single qubit A-matrix with identity to act on a two-qubit state.

\section{Some Cautionary Examples}
\label{Sec:extendingA-matrix_example}

Let our initial two qubit state be the maximally mixed state
\begin{align}
    \rho_{AB}=\dfrac{1}{4}\mathbb{I}_{AB}.
\end{align}
Let us perform the maximal depolarization on system $A$ and leave the system $B$ alone. We might naturally assume that the global A-matrix can be calculated directly as
\begin{align}
    \mathcal{A}=\mathcal{L}_\text{comp}\otimes\mathcal{I}=
    \dfrac{1}{2}\begin{pmatrix}
    1 & 0 & 0 & 1\\
    0 & 0 & 0 & 0\\
    0 & 0 & 0 & 0\\
    1 & 0 & 0 & 1
    \end{pmatrix} \otimes
    \mathbb{I}.
\end{align}
Then straight-forward calculation gives
\begin{align}
    \notag\mathcal{A}\text{vec}(\rho_{AB})=\dfrac{1}{4}\text{vec}\left(\Phi^+\right).
\end{align}

This result is obviously wrong. The correct output state is $\dfrac{1}{4}\mathbb{I}$. To find the correct A-matrix for the extension to higher dimensions, we can use the OSR. Let $\mathcal{E}_\mathcal{L}(\rho)=\sum_iE_i\rho E_i^\dagger$ correspond to the OSR of $\mathcal{L}_{\text{comp}}$. Then, the extended map is given by
\begin{align}
    \mathcal{E}_\mathcal{L}\otimes\mathcal{I}(\rho)=\sum_i(E_i\otimes \mathbb{I})\rho(E_i^\dagger\otimes \mathbb{I}).
\end{align}
Using Eq. \eqref{Bmap} we can return to the SMR representation and get the B-matrix and the A-matrix. 
The correct B and A-matrices are
\begin{align}
    \notag\mathcal{B}&=1/4\sum_i{\text{vec}(\sigma_i\otimes\mathbb{I})\text{vec}(\sigma_i\otimes\mathbb{I})^\dagger}\\
    \notag&\Rightarrow\\
    \mathcal{A}&=\left(\begin{array}{cccccccccccccccc}
         1/2 & 0 & 0 & 0 & 0 & 0 & 0 & 0 & 0 & 0 & 1/2 & 0 & 0 & 0 & 0 & 0\\
         0 & 1/2 & 0 & 0 & 0 & 0 & 0 & 0 & 0 & 0 & 0 & 1/2 & 0 & 0 & 0 & 0\\
         0 & 0 & 0 & 0 & 0 & 0 & 0 & 0 & 0 & 0 & 0 & 0 & 0 & 0 & 0 & 0\\
         0 & 0 & 0 & 0 & 0 & 0 & 0 & 0 & 0 & 0 & 0 & 0 & 0 & 0 & 0 & 0\\
         0 & 0 & 0 & 0 & 1/2 & 0 & 0 & 0 & 0 & 0 & 0 & 0 & 0 & 0 & 1/2 & 0\\
         0 & 0 & 0 & 0 & 0 & 1/2 & 0 & 0 & 0 & 0 & 0 & 0 & 0 & 0 & 0 & 1/2\\
         0 & 0 & 0 & 0 & 0 & 0 & 0 & 0 & 0 & 0 & 0 & 0 & 0 & 0 & 0 & 0\\
         0 & 0 & 0 & 0 & 0 & 0 & 0 & 0 & 0 & 0 & 0 & 0 & 0 & 0 & 0 & 0\\
         0 & 0 & 0 & 0 & 0 & 0 & 0 & 0 & 0 & 0 & 0 & 0 & 0 & 0 & 0 & 0\\
         0 & 0 & 0 & 0 & 0 & 0 & 0 & 0 & 0 & 0 & 0 & 0 & 0 & 0 & 0 & 0\\
         1/2 & 0 & 0 & 0 & 0 & 0 & 0 & 0 & 0 & 0 & 1/2 & 0 & 0 & 0 & 0 & 0\\
         0 & 1/2 & 0 & 0 & 0 & 0 & 0 & 0 & 0 & 0 & 0 & 1/2 & 0 & 0 & 0 & 0\\
         0 & 0 & 0 & 0 & 0 & 0 & 0 & 0 & 0 & 0 & 0 & 0 & 0 & 0 & 0 & 0\\
         0 & 0 & 0 & 0 & 0 & 0 & 0 & 0 & 0 & 0 & 0 & 0 & 0 & 0 & 0 & 0\\
         0 & 0 & 0 & 0 & 1/2 & 0 & 0 & 0 & 0 & 0 & 0 & 0 & 0 & 0 & 1/2 & 0\\
         0 & 0 & 0 & 0 & 0 & 1/2 & 0 & 0 & 0 & 0 & 0 & 0 & 0 & 0 & 0 & 1/2
    \end{array}\right).
\end{align}

\subsection{Some Valid A-matrices May Not Exist Physically}
\label{App:3}

Keep in mind that there exist NCP maps that are robust to asymmetric and symmetric depolarizers. This means that the parameters $\alpha, \beta,$ and $\gamma$ must be extremely close to zero in order to obtain a CP composition with the NCP A-matrix. For instance, if we define a NCP map to be
\begin{align}
\label{Eq:A_Robust}
    \mathcal{A} = 
    \left(
    \begin{array}{cccc}
        \kappa & \kappa & \kappa & \kappa \\
        \kappa & 0 & \kappa & 0 \\
        \kappa & \kappa & 0 & 0 \\
        1-\kappa & -\kappa & -\kappa & 1-\kappa 
    \end{array}
    \right),
\end{align}
where $\kappa>1/2(3-\sqrt{5})$, we see that the parameters of the asymmetric depolarizer must be less than or equal to $1/\kappa$ to get a CP composition $\mathcal{L} \circ \mathcal{A}$. To see this, we completely depolarize in two directions and determine the constraint on the third direction. The eigenvalues of the composition with $\mathcal{L}(1/\kappa, 0, 0)$ and $\mathcal{L}(0, 0, 1/\kappa)$ contain one negative value and the composition with $\mathcal{L}(0,\beta,0)$ requires $\beta=1/\kappa$ for a CP composition, which gives the eigenvalues of $\{ 1, 1, 0, 0 \}$ for the composition. Therefore, maps where $\kappa$ becomes increasingly large certainly do not capture any advantages over the symmetric depolarizer. We can look at the individual directions separately because we are depolarizing asymmetrically.

Since there exists initial states outside the domain whose eigenvalues of their resultant states blow up to $\pm \infty$ as $\kappa \rightarrow \infty$, we need to check the validity of this map. 
Just like the asymmetric depolarizer, we can define a translation of the original state's Bloch vector with the A-matrix
\begin{align}\label{eq:genTranslation}
    \mathcal{T} = \dfrac{1}{2} 
    \left(
    \begin{array}{cccc}
        2+z_0 & 0 & 0 & z_0 \\
        x_0 - i y_0 & 2 & 0 & x_0 - i y_0 \\
        x_0 + i y_0 & 0 & 2 & x_0 + i y_0 \\
        -z_0 & 0 & 0 & 2-z_0
    \end{array}
    \right)
\end{align}
to perform the action
\begin{align}
    \{ x, y, z \} \longrightarrow \{ x+x_0, y+y_0, z+z_0 \}
\end{align}
on the initial Bloch vector \cite{Jagadish_2019_General_Case}. This transformation will obviously be NCP for any non-zero translation vector $\{ x_0, y_0, z_0 \}$ since the image Bloch sphere is not fully contained in the initial Bloch Sphere. This translation matrix only has a valid physical domain when the translation vector has a magnitude less than or equal to $2$, even though it is a valid A-matrix for values outside of this interval. This begs the question of whether or not (\ref{Eq:A_Robust}) represents a valid mapping of our qubit. The answer to this question is yes, it is valid. Figure \ref{fig:ncpInfMap} illustrates the action of this map when $\kappa=1$. 

When $\kappa \rightarrow \infty$, we see that the state
\begin{align}
\label{Eq:One_point}
    \rho = \dfrac{1}{2} \big( \mathbb{I} - \sigma_x \big) \longrightarrow A(\rho) = \dfrac{1}{2} \big( \mathbb{I} - \sigma_z \big) 
\end{align}
always leads to a valid state.
The domain for the output state increases as $\kappa$ decreases. In the limit as $\kappa\rightarrow\infty$, the only valid output is the $\ket{1}$-state given by the transformation in (\ref{Eq:One_point}). So this valid A-matrix can have an operator-sum decomposition with eigenvalues, in the limit, that blow up to $\pm \infty$ while illustrating true robustness to the asymmetric depolarizing map. It is very interesting how there exists a mapping that is so resistant to heavy noise in all directions, yet the magnitude of the Bloch vector for the $\ket{-}$-state is left invariant at the end of that mapping. Future research can look into the physicality of this map such as the global evolution on some correlated state that would induce such a transformation on one of the subsystems.

\begin{figure}[h]
    \centering
    \includegraphics[scale=.7]{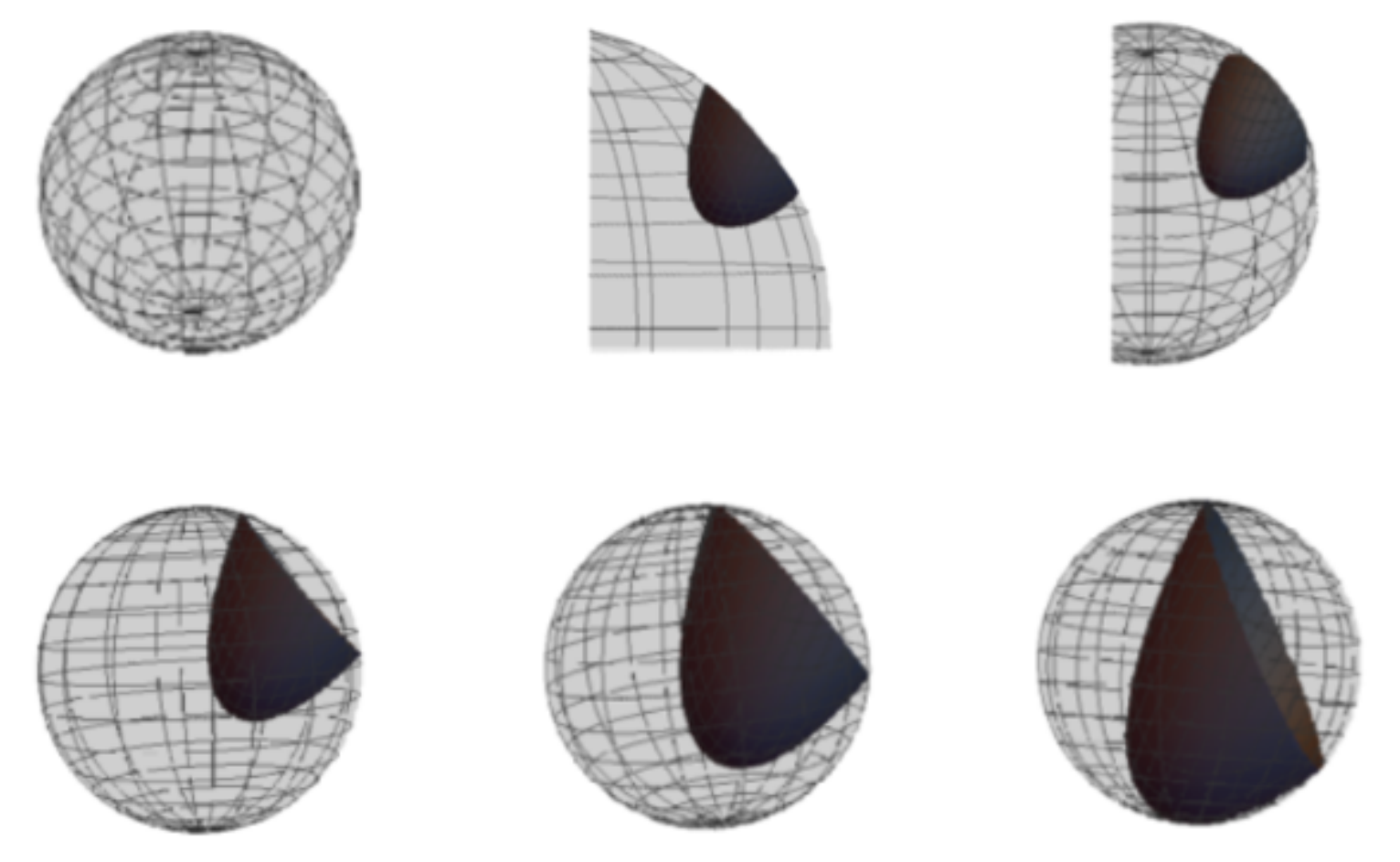}
    \caption{On top left, the length of the initial Bloch vector is $r=1/6$ and it increases in length by intervals of $1/6$ from left to right. So the bottom right Bloch sphere has a radius of 1. We fix $\kappa = 1$. We see that as the length of the initial Bloch vector increases, there is more intersection between the Bloch sphere and its image. Thus, we only need to consider when $r=1$ to determine that the transformation is always possible with increasing $\kappa$.}
    \label{fig:ncpInfMap}
\end{figure}

\section{Interesting Relation}
\label{App:2}

    We can find a unitary relation between $k$ pairs of maximally entangled qubits (ebits) $\ket{\widetilde{\Phi}^+_2}^{\otimes k}$ and a maximally entangled bipartite qudit state $\ket{\widetilde\Phi^+_d}$. Specifically,
    \begin{align}\label{eq:qubitsToQuditsURelat1}
        U\ket{\widetilde{\Phi}^+_2}^{\otimes k}=\ket{\widetilde\Phi^+_d},
    \end{align}
    where $U$ is a SWAP operator and $d=2^{k}$. We only need to consider $k\geq2$. Notice that $\ket{\widetilde\Phi^+_d}=\sum_{i=0}^{d-1}\ket{ii}$ is isomorphic to the identity $\sum_{i=0}^{d-1}\op{i}$. When there is an even number of ebits, Eq. \eqref{eq:qubitsToQuditsURelat1} implies
    \begin{align}
        U_{\text{even}} \sum_{i\in \{ 0, 1 \}} \op{i_1,i_1, i_2, i_2,\cdots, i_{k/2}, i_{k/2}}{i_{k/2+1},i_{k/2+1},i_{k/2+2},i_{k/2+2},\cdots, i_{k}, i_{k}}=\mathbb{I}
    \end{align}
    and when there is an odd number of ebits,
    \begin{align}
    \label{Eq:n_not_divisible_by_4}
        U_{\text{odd}} \sum_{i \in \{ 0, 1 \}} \op{i_1,i_1,i_2, i_2,\cdots, i_{\lfloor k/2 \rfloor +1}}{i_{\lfloor k/2 \rfloor +1},i_{\lfloor k/2 \rfloor +2},i_{\lfloor k/2 \rfloor +2},\cdots, i_{k}, i_{k}}= \mathbb{I},
    \end{align}
    where the subscript $k$ represents the $k^\text{th}$ ebit pair and the two situations require a different form of SWAP. Keep in mind that $U$ is done before applying the isomorphism to the state $\ket{\widetilde{\Phi}^+}^{\otimes k}$ so we can swap any of the systems. The idea is to make the string in the ket and the bra equal. Let $n=2k$ be the number of qubits and the qubits in the outer product be labeled as
    \begin{align}
        \op{1, 2, \cdots, \dfrac{n}{2}}{\dfrac{n}{2}+1, \cdots, n-1, n}.
    \end{align}
    
    For an even number of ebits,
    \begin{align}
        \label{Eq:n_divisible_by_4_unitary}
        U_{\text{even}} = \text{SWAP}_{(n/2), n-1} \text{SWAP}_{[(n/2)-2], n-3} \cdots \text{SWAP}_{4, [(n/2)+3]} \text{SWAP}_{2, [(n/2)+1]}.
    \end{align}
    For an odd number of ebits, we first perform 
    \begin{align}
        U_1=\text{SWAP}_{[(n/2)+1],[(n/2)+2,(n/2)+3,\cdots,n]}.
    \end{align} 
    This leads to 
    \begin{align}
        \op{i_1,i_1,i_2, i_2,\cdots, i_{\lfloor k/2 \rfloor}, i_{\lfloor k/2 \rfloor}, i_{\lfloor k/2 \rfloor +1}}{i_{\lfloor k/2 \rfloor +2},i_{\lfloor k/2 \rfloor +2},\cdots, i_{k}, i_{k}, i_{\lfloor k/2 \rfloor +1}}.
    \end{align}
    Note that the qubit labels reset after $U_1$. Then perform
    \begin{align}
        U_2 = \text{SWAP}_{[(n/2)-1], n-2} \text{SWAP}_{[(n/2)-3], n-4} \cdots \text{SWAP}_{4, [(n/2)+3]} \text{SWAP}_{2, [(n/2)+1]}.
    \end{align}
    Thus, for an odd number of ebits
    \begin{align} 
        U_{\text{odd}}=U_2U_1.
    \end{align}

\section{Additional Examples}
\label{App:4}

\subsection{Example 3}\label{subsec:ex3}
If we allow the asymmetric depolarizer to depolarize more in some directions than the symmetric depolarizer, we can get even higher values for the $M_1$ measure in some cases.
Take for example the following valid NCP map given by
\begin{align}\label{eq:Ex:maxDepolInSomeDirs}
    A = 
    \left[
    \begin{array}{cccc}
       -1/\sqrt{2} & 0 & 0 & 1 - 1/\sqrt{2} \\
       -1/\sqrt{2} & -1 & 0 & -1/\sqrt{2} \\
       -1/\sqrt{2} & 0 & -1 & -1/\sqrt{2} \\
       1 + 1/\sqrt{2} & 0 & 0 & 1/\sqrt{2}
    \end{array}
    \right],
\end{align}
which has an associated B-matrix with eigenvalues $2,-\sqrt{2},\sqrt{2},$ and $0$. The optimal symmetric depolarizer that makes the composition CP is given by the parameter 
\begin{align}
    \kappa = 1/(1+2 \sqrt{2}).    
\end{align}
It turns out that for any $\alpha, \beta, \gamma \geq \kappa$ for the asymmetric depolarizer, and not all equal to $\kappa$, we cannot make the composition CP. An optimization shows that we can completely depolarize in the x and z direction, while leaving the y direction unchanged; that is, 
\begin{align}
    \alpha = 0 = \gamma \text{ and }\beta = 1.
\end{align}
The $M_1$ measure is optimal and higher for the asymmetric depolarizer, even though we depolarized more than the symmetric depolarizer in two of the directions. To maintain complete positivity of the composition map while the magnitude of the depolarizing is less than or equal to the symmetric depolarizer in each direction, we must include negative values for one or two of the asymmetric depolarizing parameters. Depending on what measure you choose to use, $M_1$ measure or fidelity, you may want to depolarize more in some directions and less in others over the symmetric depolarizer.

In this next example, our $M_1$ measure can only be optimized when we depolarize more than the symmetric depolarizer in two of the directions. Define the NCP map with the evolutionary operator
\begin{align}
    A = 
    \left[
    \begin{array}{cccc}
       1-1/\sqrt{2} & 0 & 0 & -1/\sqrt{2} \\
       -1/\sqrt{2} & 1 & 0 & -1/\sqrt{2} \\
       -1/\sqrt{2} & 0 & 1 & -1/\sqrt{2} \\
       1/\sqrt{2} & 0 & 0 & 1+1/\sqrt{2}
    \end{array}
    \right]
\end{align}
which has an associated B-matrix with the eigenvalues $\{ 1 \pm \sqrt{2}, \pm 1 \}$. This map will translate the original Bloch sphere in the negative $x$ and $z$ directions by $-\sqrt{2}$ so that the original and image spheres touch at one point, i.e., there is only a single state in the domain defined by the state 
\begin{align}
    \rho = (1/2)[\mathbb{I} + (1/\sqrt{2})(\hat{\sigma}_x + \hat{\sigma}_z)]    
\end{align}
on the domain and $\hat{\sigma}_y \rho^T \hat{\sigma}_y$ on the image. If we symmetrically depolarize with parameter 
\begin{align}
    \kappa = 1/3,
\end{align} 
then the SPA becomes completely positive. The $M_1$ measure is equal to $1/3$ in this case, but we can obtain a value of $(1/3)*1.309$ when we use an asymmetric depolarizer with parameters 
\begin{align}
    \alpha = \gamma = 49/200 < \kappa \text{ and }\beta = 819/1000 > \kappa.
\end{align}

It turns out that we can only achieve an $M_1$ measure of up to $1/3$ when $|\alpha|, |\beta|, |\gamma| \geq 1/3$. You can see this by slightly bumping the absolute value of each parameter consecutively by a small value above $1/3$ and noticing that the composition map becomes NCP, as shown in the Mathematica file we uploaded to GitHub \cite{github_dilley}. Therefore, it is necessary to depolarize more than the symmetric depolarizer in two of the directions in order to achieve an advantage for the ADM. We also achieve a higher fidelity of Eq. (\ref{eq:fidelityErrorCorr}) in this case ($151/400 = 0.3775$) than we do in the symmetric one ($1/3$).

\subsection{Example 4}\label{subsec:ex4Repolar}
Let
\begin{align}
\mathcal{A}_\text{NCP}=\begin{bmatrix}
0 &0 &0 &1\\
0 &0 &x &0\\
0 &x &0 &0\\
1 &0 &0 &0
\end{bmatrix},
\qquad
\mathcal{B}_\text{NCP}=\begin{bmatrix}
0 &0 &0 &0\\
0 &1 &x &0\\
0 &x &1 &0\\
0 &0 &0 &0
\end{bmatrix},
\end{align}
where $x>1$. Then, the composition $\mathcal{L}(1/x,-1/x,-1)\circ \mathcal{A}_\text{NCP}$ is CP, where
\begin{align}
\mathcal{L}(1/x,-1/x,-1)=\begin{bmatrix}
0 &0 &0 &1\\
0 &0 &1/x &0\\
0 &1/x &0 &0\\
1 &0 &0 &0
\end{bmatrix},
\qquad
\mathcal{B}_{\mathcal{L}}=\begin{bmatrix}
0 &0 &0 &0\\
0 &1 &1/x &0\\
0 &1/x &1 &0\\
0 &0 &0 &0
\end{bmatrix}.
\end{align}
Notice how the initial and final states are left invariant after this transformation due to the fact that $\mathcal{L} \circ \mathcal{A} = \mathbb{I} \otimes \mathbb{I}$. This will always be true if $\mathcal{A}$ is invertible and its inversion has a positive dynamical B-matrix. This leads to the open question of which full-rank invertible A-matrices have non-negative dynamical B-matrices for the inverse? In the case of asymmetric depolarization, the initial parameters $\alpha, \beta, \gamma$ must be greater than 1 or less than $-1$. The translation NCP map will always have a negative inversion. There exists other $\mathcal{A}$ maps that cannot be expressed in terms of translations and asymmetric depolarization. Generally, it is unknown for these maps which will have CP inversions.

\subsection{Example 5}
Let
\begin{align}
\mathcal{A}_\text{NCP}=\begin{bmatrix}
1/2 &0 &0 &3/2\\
0 &x &0 &0\\
0 &0 &x &0\\
1/2 &0 &0 &-1/2
\end{bmatrix},
\qquad
\mathcal{B}_\text{NCP}=\begin{bmatrix}
1/2 &0 &0 &x\\
0 &3/2 &0 &0\\
0 &0 &1/2 &0\\
x &0 &0 &-1/2
\end{bmatrix},
\end{align}
where $x\not\in (-2,2)$. Then, the composition $\mathcal{L}(1/x^2,1/x^2, 0)\circ \mathcal{A}_\text{NCP}$ is CP, where
\begin{align}
\mathcal{L}(1/x^2,1/x^2, 0)=\begin{bmatrix}
1/2 &0 &0 &1/2\\
0 &1/x^2 &0 &0\\
0 &0 &1/x^2 &0\\
1/2 &0 &0 &1/2
\end{bmatrix},
\qquad
\mathcal{B}_\mathcal{L}=\begin{bmatrix}
1/2 &0 &0 &1/x^2\\
0 &1/2 &0 &0\\
0 &0 &1/2 &0\\
1/x^2 &0 &0 &1/2
\end{bmatrix}.
\end{align}

\section*{References}
\bibliographystyle{unsrt}

\end{document}